\documentclass[aps,pre,twocolumn,floatfix]{revtex4}

\usepackage{latexsym}
\usepackage{amsthm}
\usepackage{amssymb}
\usepackage{amscd}
\usepackage{bm}
\usepackage{amsmath}
\usepackage{mathtools}
\usepackage{float}

\usepackage{graphicx}
\usepackage{subfigure}
\usepackage{placeins}
\usepackage[version=4]{mhchem}

\providecommand{\abs}[1]{\ensuremath{\left\vert#1\right\vert}}

\providecommand{\set}[1]{\ensuremath{\left\{#1\right\}}}
\providecommand{\paren}[1]{( #1 )}
\providecommand{\brac}[1]{\ensuremath{\left[#1\right]}}
\newtheorem{proposition}{Proposition}

\begin{document}

\title{Statistical Topology of Bond Networks \\ with Applications to Silica}

\author{B. Schweinhart}
\email{schweinhart.2@osu.edu}
\affiliation{Department of Mathematics, Ohio State University, Columbus, OH, USA}
\author{D. Rodney}
\email{david.rodney@univ-lyon1.fr}
\affiliation{Institut Lumière Matière, University of Lyon, Villeurbanne, Lyon, FR}
\author{J. K. Mason}
\email{jkmason@ucdavis.edu}
\affiliation{Department of Materials Science and Engineering, University of California, Davis, CA, USA}

\begin{abstract}

Whereas knowledge of a crystalline material's unit cell is fundamental to understanding the material's properties and behavior, there are not obvious analogues to unit cells for disordered materials despite the frequent existence of considerable medium-range order. This article views a material's structure as a collection of local atomic environments that are sampled from some underlying probability distribution of such environments, with the advantage of offering a unified description of both ordered and disordered materials. Crystalline materials can then be regarded as special cases where the underlying probability distribution is highly concentrated around the traditional unit cell. Four descriptors of local atomic environments suitable for disordered bond networks are proposed and applied to molecular dynamics simulations of silica glasses. Each of them reliably distinguishes the structure of glasses produced at different cooling rates, with the $H_1$ barcode and coordination profile providing the best separation. 
\end{abstract}

\maketitle

\section{Introduction}
\label{sec_introduction}
A bond network is one in which atoms connected by covalent bonds form a network that extends throughout the material. The nature of covalent bonding is such that every atom of a given species generally forms the same number of covalent bonds (referred to below as the \emph{valence}), though Fig.~\ref{fig:silica} shows that this does not significantly restrict the overall network connectivity. The purpose of this article is to characterize this connectivity, and to enable quantitative comparison of the connectivity of different bond networks. Since the connectivity is entirely defined by knowledge of the existence of atoms and the bonds connecting them, the resulting analysis does not depend on variations in bond energies or the precise geometries of local atomic environments. That is, a bond network is considered as a graph (in the sense used in discrete mathematics) where the atoms constitute the vertices (possibly labelled by atomic species) and the covalent bonds the edges.

\begin{figure}
\centering  
\subfigure[]{\label{fig:ordered}
\includegraphics[width=0.45\linewidth]{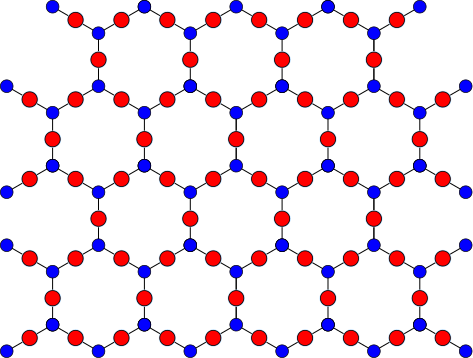}}
\subfigure[]{\label{fig:disordered}
\includegraphics[width=0.49\linewidth]{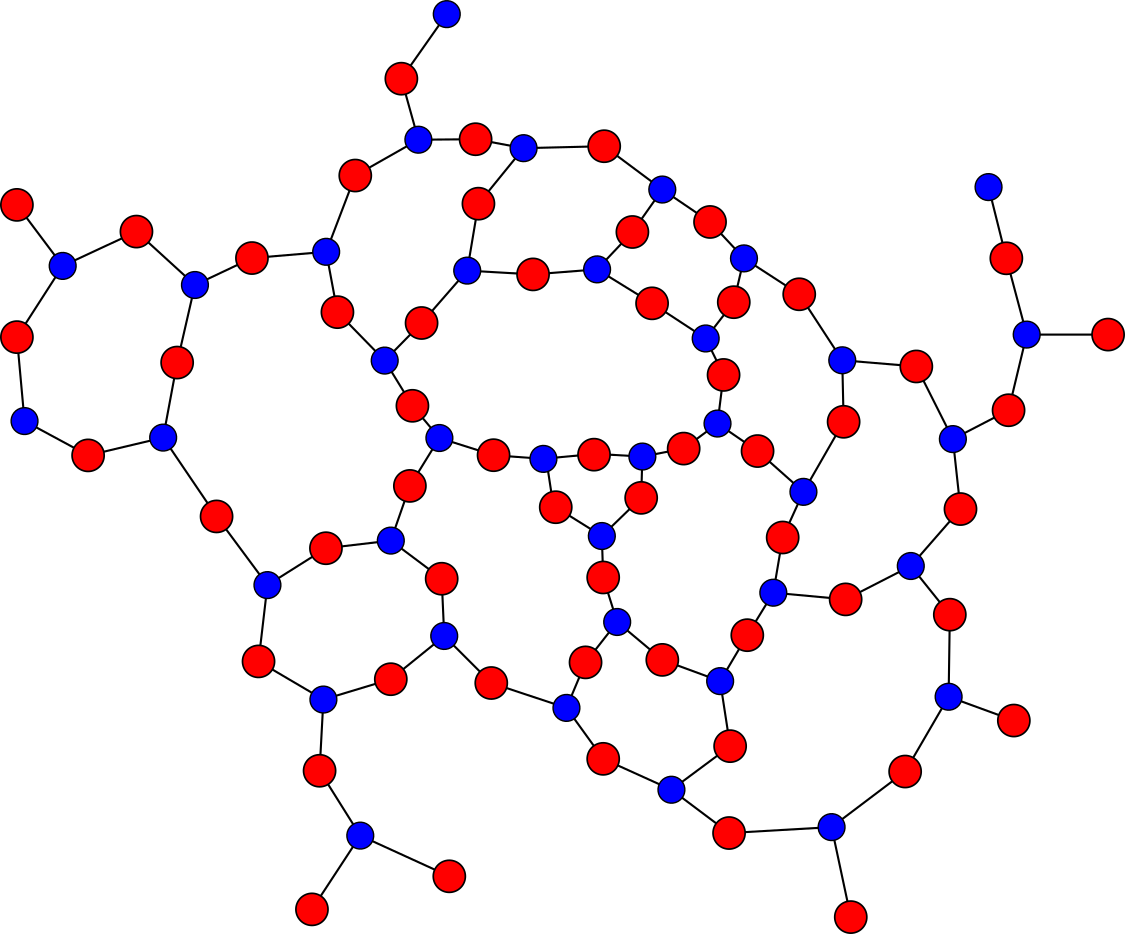}}
\caption{Ordered (a) and disordered (b) bond networks containing 2-valent red atoms and 3-valent blue atoms.
\label{fig:silica}}
\end{figure}

Let the distance between two atoms be defined as the number of edges along the shortest path between the corresponding vertices. A local atomic environment of radius $r$ centered at an atom $v$ is then defined as the subgraph consisting of all atoms within distance $r$ of $v$ and all covalent bonds between atoms in this set, as in Fig.~\ref{fig:environment}. The atom $v$, with a distinguished position at the center of the local atomic environment, is called the \emph{root}. A perfect crystalline solid, by definition, contains a small number of topological types of local atomic environments; considering the polymorphs of silica (\ce{SiO2}) as examples, there is just one environment in cristobalite and two in coesite if the roots are restricted to silicon atoms. By contrast, a disordered solid contains many topological types, though these do not all necessarily have the same probability of occurrence. This suggests that every material with a given chemical composition and processing history has a characteristic probability distribution of local atomic environments, with crystalline solids as special cases where the probability distribution is highly concentrated. Knowledge of this probability distribution would then subsume that of the unit cell, and apply to ordered and disordered bond networks alike. This approach has the further advantage of being sensitive to both continuous and discontinuous variations in the rates of occurrence of local atomic environments, and could therefore be used for, e.g., phase detection, identification of crystal nuclei, or verification that computationally-generated structures correspond to experimental ones.

\begin{figure}
\centering
\includegraphics[width=.45\linewidth]{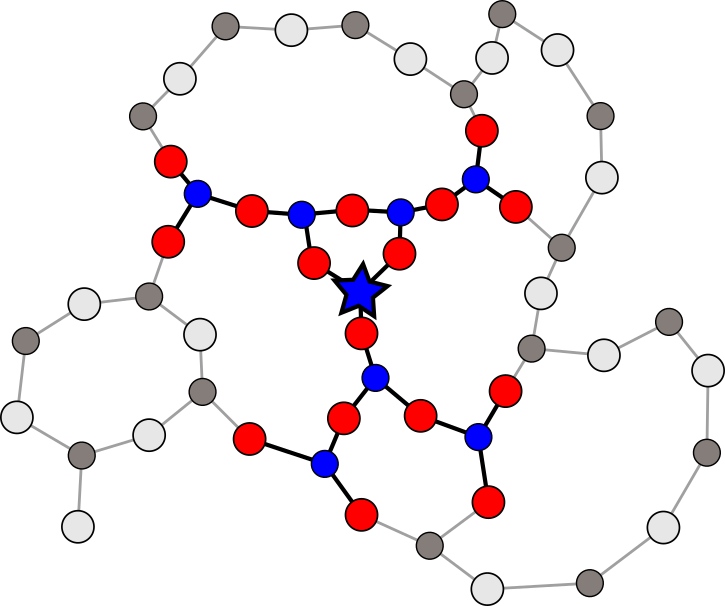}
\caption{A local atomic environment of radius $5$ inside a larger bond network. The root atom is marked by a blue star.
\label{fig:environment}}
\end{figure}

A central concern that has yet to be addressed is precisely which local atomic environments should be considered equivalent and which distinct. We propose that there is no single notion of equivalence that is preferable in all situations. Four possibilities are discussed below in the context of molecular dynamics simulations of silica glasses produced at different cooling rates. For this application, the equivalence relation that allows the atomic networks to be most easily distinguished on the basis of local structural differences is preferred. If instead the relative nucleation rates of competing crystalline phases was the subject of study, then perhaps the one that could most easily distinguish the characteristic probability distributions of different crystalline phases would be more preferable.
 
In previous work~\cite{2012mason,2016schweinhart}, two local environments were considered to be equivalent up to graph isomorphism, effectively employing all of the available topological information to compare local atomic environments. While theoretically satisfying, this can cause the number of equivalence classes (distinct types of local environments) to grow very quickly with the radius of the neighborhood. For example, using this equivalence relation gives $9 \times 10^4$ distinct equivalence classes in a sample of $10^5$ environments in silica glass at a radius sufficient to differentiate the crystalline forms of \ce{SiO2}. This is not a useful summary of the local structure; the true probability distribution of local atomic environments would require an infeasibly large sample population to estimate, and even if known would consider nearly every environment as unique.

This proliferation of equivalence classes is addressed by considering three further equivalence relations which are coarser (encode less information) than graph isomorphism, and effectively assign a summary of topological properties to a local atomic environment. Two are based on classical descriptors, namely, the coordination profile which records the coordination numbers of atoms at each distance from the root, and the primitive ring profile which records the number and length of primitive rings containing the root. The third equivalence relation is the $H_1$ barcode, a novel ring-based descriptor that is defined in terms of the first homology group. The $H_1$ barcode is better at distinguishing local atomic environments in crystalline and glassy silica than the primitive ring profile, particularly when every atom in the network is required to have the standard valence.

There is an extensive literature on local structural descriptors for bond networks (the introduction of Ref.~\cite{2016schweinhart} gives a brief survey). Specifically for covalent glasses, our approach is most similar to the local clusters of Hobbs et al.~\cite{1997hobbs}. The local cluster of an atom is the union of all primitive rings (defined below) containing that atom, and the equivalence relation on local clusters is graph isomorphism. This approach is, however, found to suffer from the same proliferation of equivalence classes described above for the graph isomorphism equivalence relation. Ring statistics are perhaps the most commonly used local structural descriptor for silica networks \cite{1967king}, with a survey of several different classes of distinguished rings given in Ref.~\cite{2010leroux}. Recently, persistent homology was applied to the study silica networks~\cite{2016hiraoka}; while the $H_1$ barcode introduced here can be defined in terms of persistent homology, the approach followed below is quite different.

Previous studies have identified various structural properties of silica glasses that depend on, e.g., the cooling rate from the liquid~\cite{2018deng,2017li,2015koziatek,2013tilocca,1996vollmayr}. Our methodology is applied to simulations of silica glasses quenched from the liquid at three different rates. Overall, glasses that form with faster cooling rates are found to exhibit more disorder, have a higher frequency of rings with $6$ or $8$ atoms ($3$ or $4$ silicon atoms), and have more coordination defects.

\section{Classification of Local Atomic Environments}
\label{sec:class}

Let $V$ be the set of atoms comprising a bond network. A structural descriptor of a bond network is defined as a function $X$ that assigns to the atom $v$ a summary $X\paren{v}$ of the properties of the local atomic environment around $v$. Given a descriptor $X$, atoms $v_1$ and $v_2$ are considered to be equivalent ($v_1 \sim v_2$) if their descriptions are the same ($X\paren{v_1} = X\paren{v_2}$); that is, $X$ induces an equivalence relation on atoms $v \in V$. Given two descriptors $X$ and $Y,$ $X$ is said to be coarser than $Y$ if equivalence under $Y$ implies equivalence under $X$ for all pairs of atoms $v_1$ and $v_2$. Each of the descriptors considered here depends on the radius $r$ of the local environment, and each has the property that $X_r$ is coarser than $X_{r + 1}$ for all $r > 0$ (higher values of the radius provide more information).

Let the \emph{equivalence class} $x$ denote the set of all local atomic environments such that $X(v) = x$ for the root atom $v$. Since the underlying probability distribution on equivalence classes is not usually known for a given descriptor $X$ and atom set $V$, our approach instead uses empirical probability distributions. The empirical probability distribution of $X$ on $V$ assigns to every equivalence class $x$ the probability that $X(v) = x$ for a randomly chosen atom $v \in V$, or
\[P_{X}\paren{x} = (\# v \in V: X(v) = x) / \abs{V}.\]
A descriptor $X$ and two bond networks with atom sets $V_1$ and $V_2$ then result in two discrete probability distributions $P_{X}\paren{x}$ and $Q_{X}\paren{x}$, and a measure of the similarity of these discrete probability distributions could be used to define the similarity of the bond networks (e.g., a metric on discrete probability distributions induces a pseudometric on bond networks). While the symmetrized Kullback--Leibler divergence~\cite{1951kullback} is used for this purpose below, other possibilities include the standard $L_1$ or $L_2$ metrics, or a Wasserstein distance that incorporates information about the geometric similarity of equivalence classes~\cite{2016schweinhart}. 

Three of the four descriptors below encode intuitive topological information about a local atomic environment, though the technical descriptions of the descriptors can be deceptively involved. It is useful to frequently consult examples of local atomic environments and the corresponding descriptors while reading the descriptions below to develop the above-mentioned intuition. One example is provided in Fig.~\ref{fig:example_environment}, and many more of relevance to silica glasses are given in Appendix~\ref{sec:examples}.

\subsection{Graph Isomorphism}

The most detailed of the descriptors uses \emph{graph isomorphism} to construct equivalence classes of local atomic environments. This requires that two local atomic environments $U_1$ and $U_2$ be considered equivalent if there is a function $\phi$ that matches every atom in $U_1$ with an atom in $U_2$ such that (1) $v$ and $\phi(v)$ have the same atomic type for all $v \in U_1$, and (2) $v, w \in U_1$ are bonded if and only if $\phi(v)$ and $\phi(w)$ are as well. The resulting equivalence class specifies all topological information about the local atomic environment. From the standpoint of ring statistics, graph isomorphism controls the number and length of rings as well as their adjacency and distance from the root. The methodology described in~\cite{2016schweinhart} was used to  calculate the corresponding descriptor; that is, a local environment was represented by an adjacency matrix written in canonical form using the software package Nauty~\cite{2014mckay}.

\subsection{The Coordination Profile}

\begin{figure}
\centering
\includegraphics[width=.95\linewidth]{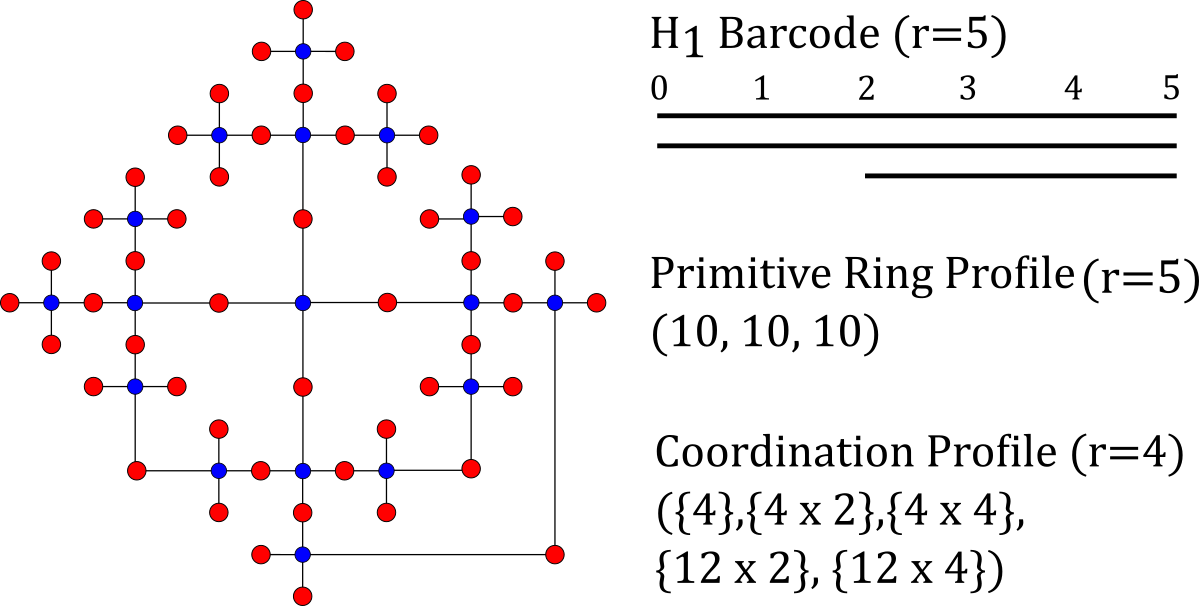}
\caption{An example local atomic environment and its $H_1$ barcode, primitive ring profile, and coordination profile.
\label{fig:example_environment}}
\end{figure}

The \emph{coordination profile} is the simplest descriptor considered here. At radius $r$ this is a vector of $r$ unordered lists where the $i^{\text{th}}$ list gives the coordinations (valences) of all atoms at distance $i^{\text{th}}$ from the root. An example of a local atomic environment and its coordination profile is shown in Fig.~\ref{fig:example_environment}; the root has valence four, each of the atoms in the first neighbor shell has valence two, and the valences alternate with neighbor shell number. Observe that the coordination profile at radius $r$ implicitly includes information about the bonds between atoms in the $r^\text{th}$ and $(r+1)^\text{th}$ shells; this means that the coordination profile at radius $r$ provides a coarser classification than graph isomorphism at radius $r + 1$, but not at radius $r$.

 For a perfectly coordinated silica network---one where every oxygen is adjacent to two silicons and every silicon is adjacent to four oxygens---the coordination profile contains equivalent information to the number of atoms in each neighbor shell around the root. This is known as the \emph{shell count} of the local atomic environment. For example, the radius $5$ shell count of the configuration in Fig.~\ref{fig:example_environment} is $\paren{1,4,4,12,12,33}$.

\subsection{The Primitive Ring Profile}

\begin{figure}
\centering  
\subfigure[]{\label{fig:rings_left}
\includegraphics[width=0.18\linewidth]{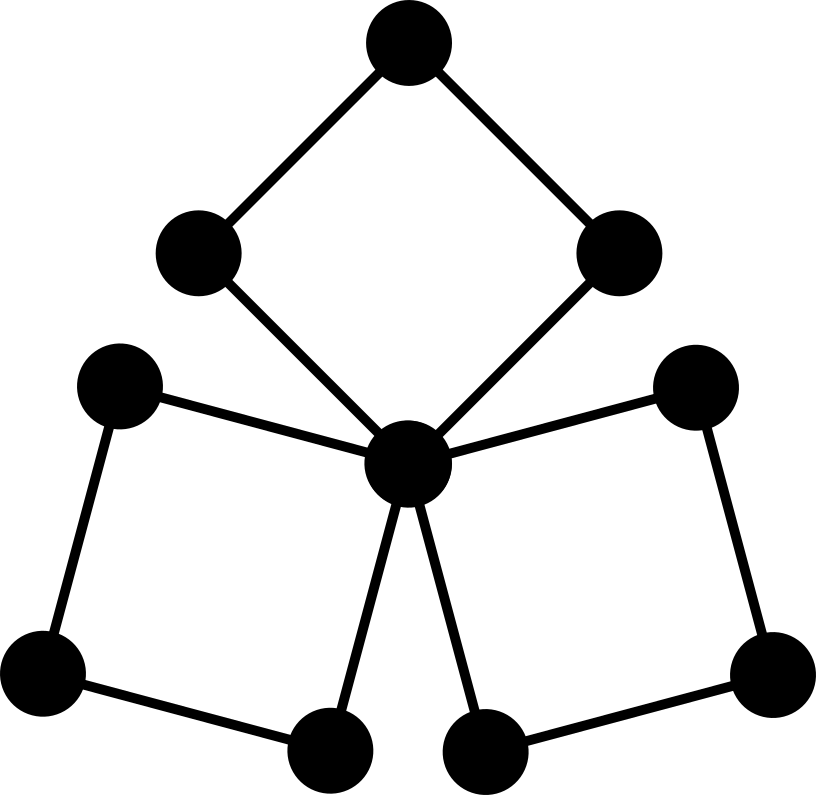}}
\hspace{0.15\linewidth}
\subfigure[]{\label{fig:rings_right}
\includegraphics[width=0.18\linewidth]{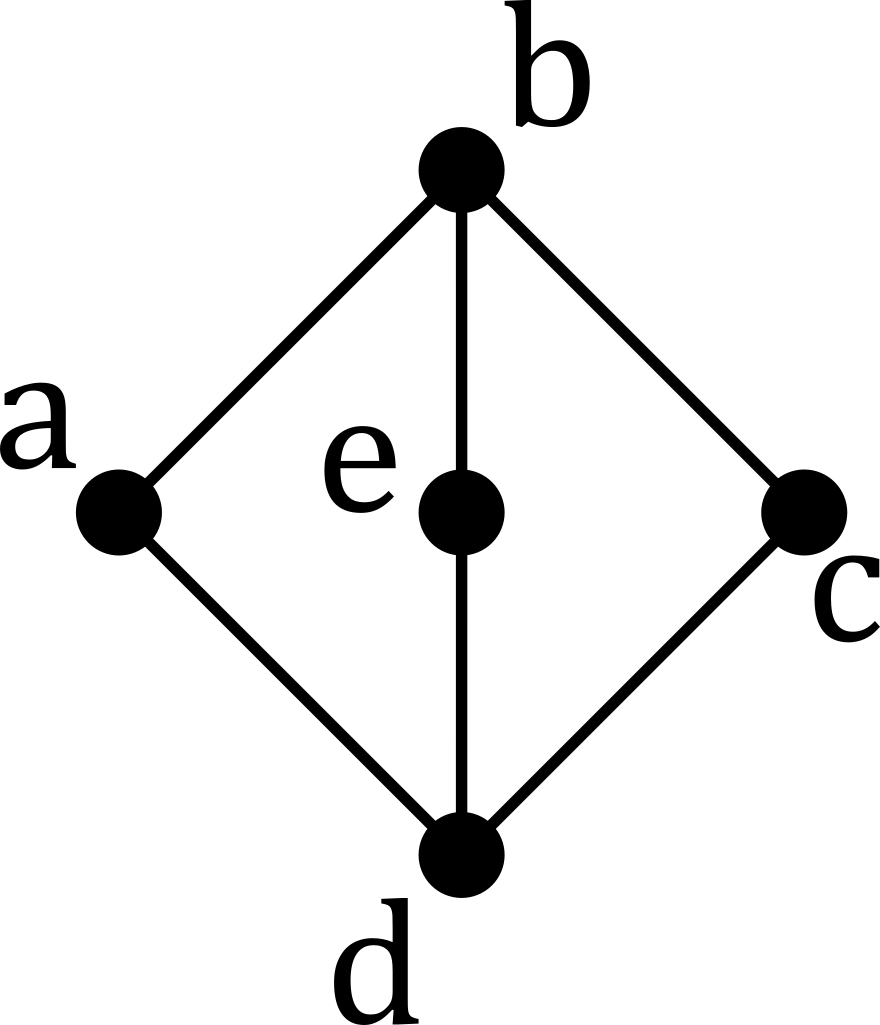}}
\caption{\label{fig:rings} Two configurations with three primitive rings of length four. Whereas (a) has three algebraically independent rings, (b) has only two.}
\end{figure}

A \emph{primitive ring}~\cite{1990marians,1990guttman} is defined as a ring of bonded atoms $\paren{v_1,v_2,\ldots,v_k,v_1}$ such that the shortest path connecting any pair of atoms in the ring is contained within the ring. That is, the shortest path between $v_i$ and $v_j$ for all $1 \leq i < j \leq k$ is either $v_i,v_{i+1},\ldots,v_j$ or $v_j,v_{j+1},\ldots,v_k, v_1,v_2,\ldots,v_{i}$. For example, the configurations in Fig.~\ref{fig:rings} both have three primitive rings. Note that if $v$ is an atom in a bond network, a primitive ring containing $v$ is primitive with respect to the local atomic environment rooted at $v$ if and only if it is primitive in the entire network as well; this is not necessarily true if a ring does not contain $v$. The \emph{primitive ring profile} of radius $r$ of an atom $v$ gives the lengths of all primitive rings which contain $v$ and at most $2 r$ total atoms, as in the example in Fig.~\ref{fig:example_environment}. Unlike the coordination profile, the primitive ring profile stabilizes at a relatively small value of the radius $r$ (there are often alternate paths connecting distant parts of large rings). The algorithm of Yuan and Cormack~\cite{2001yuan} is used to compute primitive rings.

The ring lengths given here are \emph{twice} the usual values in the silica literature where only silicon atoms are reported~\cite{1967king}. For example, a $12$-ring in this article contains $6$ silicon and $6$ oxygen atoms.

\subsection{The $H_1$ Barcode}

The $H_1$ barcode is proposed here as a set of intervals corresponding to algebraically independent rings in a local atomic environment, and is defined in terms of the first homology group. An interval of the form $(j, k)$ with $0 \leq j < k$ corresponds to a ring of unknown length whose atoms are all between distance $j$ and $k$ of the root. An interval $(0, k)$ more specifically corresponds to a ring containing the root atom and either $2 k - 1$ or $2 k$ total atoms. If there are fewer intervals of the form $(0, k)$ than primitive rings passing through the root, then there is an interval of the form $(j, k)$ which encodes information about the relationships between the primitive rings; Fig.~\ref{fig:example_environment} gives an example.

\subsubsection{Definition of Homology}

The definition of homology as it relates to the $H_1$ barcode is more involved than the definitions of the other descriptors considered here; this arises from the requirement that the rings be \emph{algebraically independent}. As motivation for the discussion below, consider that while there is no unique basis for the Euclidean plane, every basis contains precisely two vectors. That is, the space is clearly two-dimensional despite the ambiguity of this choice. The situation is (perhaps surprisingly) similar for rings in a bond network. Consider the three rings in Fig.~\ref{fig:rings_right}; in a sense that is made precise below, any two of these rings can be composed to generate the third in the same way that any two linearly independent vectors in the plane can be regarded as a basis. While the identity of the rings is not well-defined, the number of such algebraically independent rings certainly is well-defined, and the $H_1$ barcode indicates the changes to this number as a function of $r$.

The first homology group is a vector space whose dimension equals the number of algebraically independent rings in a bond network (refer to Ref.~\cite{2002hatcher} for an introduction to general homology theory). If $G$ is a bond network, then the chain group $C_0\paren{G}$ is defined as the vector space of all formal sums $\sum_{i = 1}^k a_i v_i$ where $a_i \in \mathbb{R}$ and $v_i$ is an atom of $G$; this effectively attaches a real number to each atom. Similarly $C_1\paren{G}$ is the vector space of all formal sums $\sum_{i = 1}^k a_i \paren{v_i, w_i}$, where $a_i \in \mathbb{R}$ and $\paren{v_i, w_i}$ is a pair of atoms of $G$ connected by a bond, with the relation $\paren{v_i, w_i} = -\paren{w_i, v_i}$; this effectively attaches a real number to each \emph{oriented} bond. If an element of $C_1(G)$ is regarded as defining the rate of fluid flow along each bond of the network, it is natural to ask the corresponding rate of fluid accumulation around each atom. The linear function $\partial:C_1\paren{G} \rightarrow C_0\paren{G}$ is defined by
\[\partial \bigg[ \sum_{i = 1}^k a_i\paren{v_i, w_i} \bigg] = \sum_{i = 1}^k a_i v_i - a_i w_i\]
and effectively calculates this quantity for every atom of the network simultaneously. The first dimensional homology of $G$ is then the kernel of $\partial$, or the set of all balanced fluid flows:
\[H_1\paren{G} = \set{\sigma \in C_1(G) : \partial(\sigma) = 0}.\]
$H_1\paren{G}$ is generated by oriented rings of $G$, or equivalently every balanced fluid flow can be constructed as a superposition of linearly independent flows around a well-defined number of closed circuits (though the set of closed circuits is not uniquely defined). For example, in Fig.~\ref{fig:rings_right}, $\paren{a,b
}+\paren{b,c}+\paren{c,d}+\paren{d,a}\in H_1\paren{G}$ but $\paren{a,b
}+\paren{b,c}+\paren{c,d}+\paren{a,d}\notin H_1\paren{G}.$ Also, note that if $\sigma_1,\sigma_2,$ and $\sigma_3$ are the three primitive rings
\begin{align*}
\sigma_1=& \paren{a,b}+\paren{b,c}+\paren{c,d}+\paren{d,a} \\
\sigma_2=&\paren{a,b}+\paren{b,e}+\paren{e,d}+\paren{d,a} \\
\sigma_3=&\paren{b,c}+\paren{c,d}+\paren{d,e}+\paren{e,b}
\end{align*}
then there exists the relation
\[\sigma_1=\sigma_2+\sigma_3.\]
That is, as claimed in the opening of this section, only two of the three rings are algebraically independent.

The rank of the vector space $H_1\paren{G}$ gives the number of algebraically independent rings of $G.$ For example, the ranks of the first homology groups of the configurations in Figures~\ref{fig:rings_left} and~\ref{fig:rings_right} are $3$ and $2$, respectively. Practically speaking, the rank of $H_1(G)$ can be efficiently computed using only information about the Euler characteristic and number of connected components of $G$ (two atoms are in the same connected component if there is a path connecting them):
\begin{align*}
\chi\paren{G} &= \#\text{atoms} - \#\text{bonds} \\
&= \#\text{components} - \text{rank}\brac{H_1\paren{G}}.
\end{align*}
Solving for the rank of $H_1(G)$ gives the concise equation
\begin{equation}
\label{eqn_chi}
\text{rank}\brac{H_1\paren{G}} = \#\text{components} - \#\text{atoms} + \#\text{bonds}.
\end{equation}
Finally, we emphasize that the vector space $H_1\paren{G}$ does not have a distinguished basis of ``shortest'' rings; each pair of primitive rings in Fig.~\ref{fig:rings_right} is a viable basis.

\subsubsection{Definition of the $H_1$ Barcode}

\begin{figure}
\includegraphics[width=0.95\linewidth]{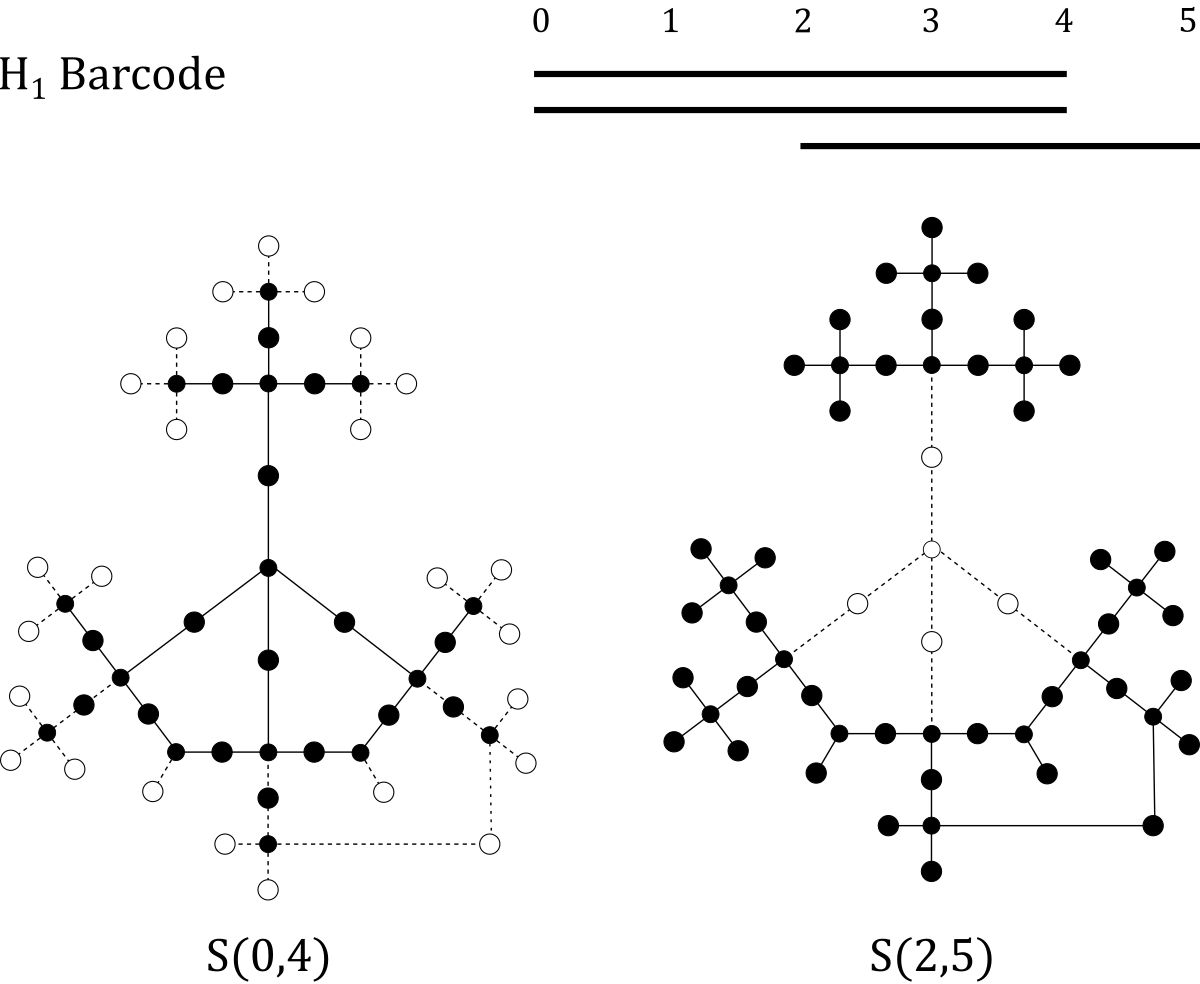}
\caption{\label{fig:shell_example} Two shell annuli of a local atomic configuration, and the corresponding $H_1$ barcode.}
\end{figure}

If $v$ is a root atom in a bond network $G$ and $i\leq j$, the $i,j$-shell annulus $S\paren{i, j}$ at $v$ is the subgraph composed of all atoms between distances $i$ and $j$ from $v$ and the included bonds. $F\paren{i, j}$ is defined as the number of algebraically independent rings of $S\paren{i, j}$, or
\begin{equation}
\label{eq_F}
F\paren{i, j} = \text{rank}\!\set{H_1\!\brac{S\paren{i, j}}}.
\end{equation}
For example, the local atomic environment in Fig.~\ref{fig:shell_example} has $F\paren{0, 5} = 3$, $F\paren{0, 4}=2$ and $F\paren{2, 5} = 1$. If $\paren{i, j} \subseteq \paren{k, l}$, then rings that are algebraically independent in $S\paren{i,j}$ remain algebraically independent in $S\paren{k,l}$, implying that $F\paren{i, j} \leq F\paren{k, l}$. It follows that there is a unique set of intervals $BC$ (the $H_1$ barcode) such that 
\[F\paren{i, j} = \#\set{I \in BC:I \subseteq \paren{i, j}}.\]
The information in $BC$ for a root $v$ is displayed in a barcode, as in Fig.~\ref{fig:shell_example} for which $BC = \set{\paren{0, 4}, \paren{0, 4}, \paren{2, 5}}$.

The $H_1$ barcode could alternatively be defined in terms of the zigzag persistent homology~\cite{2009carlsson} or extended persistent homology~\cite{2009cohensteiner} of the distance function to the root vertex. Those concepts could also be used to define other descriptors that contain different information about the local atomic environment. Here the $H_1$ barcode was computed using M\"{o}bius inversion, as described in Appendix~\ref{appendix:H1}, and (perhaps surprisingly) was faster than an algorithm based on extended persistent homology.


\subsection{Speed of Computation}

\begin{table}
\centering
\begin{tabular}{l|c}
Method & Time (s) \\
\hline
$H_1$ barcode & $12.6$\\
\hline
Primitive ring & $23.6$\\
\hline
Coordination & $1.8$
\end{tabular}
\caption{\label{table:speed}Seconds required to classify $10^5$ atomic environments at radius $6$. The data is from a molecular dynamics simulation of glasses cooled at $5 \times 10^{11} \, \mathrm{K / s}$, which is described in the next section.}
\end{table}

The local atomic environments of a bond network are classified by iterating through a list of all possible root atoms, and computing the selected descriptor for the local atomic environment around each one. The numbers of occurrences of every observed equivalence class are stored in a hash table, with an overall runtime that is linear in the number of root atoms. While the majority of the runtime is spent on computing the descriptors, the asymptotic runtime of these algorithms is not particularly relevant as the local atomic environments contain a small number of atoms. Computations were performed on a $2.3$ GHz AMD Opteron processor with $64$ GB of RAM, and the code was compiled with g++ version 7.3.1 using the -O2 flag.

Table~\ref{table:speed} shows the time required to classify $10^5$ atomic environments at radius $r = 6$. The coordination profile was the fastest, requiring only $1.8$ seconds to compute as this descriptor involved no additional computations once the local atomic environment was known. The $H_1$ barcode required $12.6$ seconds by comparison. The primitive ring profile was the slowest of the three, requiring $23.5$ seconds in the best case. Our implementation used the Yuan and Cormack~\cite{2001yuan} primitive ring algorithm where the global structure of the bond network is used to accelerate the computation of individual profiles. Without this optimization the runtime for the primitive ring profile was $127.1$ seconds, still substantially faster than the algorithm of Hobbs et al.~\cite{1997hobbs} which required $476.7$ seconds.

\section{Applications to Silica}

Silicon dioxide, otherwise known as silica, exists in a variety of crystalline and glassy forms at ambient conditions. Silicon atoms are generally bonded to four oxygen atoms, and oxygen atoms are generally bonded to two silicon atoms, though this is not universally true in poorly relaxed systems, and the coordination of \ce{Si} atoms can be as high as $6$ when under compression \cite{2010benmore}. The four structural descriptors of local atomic environments described in Sec.~\ref{sec:class} were applied to three crystalline forms of silica and to silica glasses generated by molecular dynamics simulations, quenched from the liquid at $5 \times 10^{11} \, \mathrm{K / s}$, $5 \times 10^{12} \, \mathrm{K / s}$, and $5 \times 10^{13} \, \mathrm{K / s}$. An informative descriptor of local structure would ideally distinguish all of these cases. All local atomic environments are rooted at silicon atoms in the following.

The details of the molecular dynamics simulations are described in Appendix~\ref{sec:MD}. For each quench rate, $100$ configurations of $3,000$ atoms ($1,000$ \ce{Si} atoms and $2,000$ \ce{O} atoms) were generated, resulting in data sets of $10^5$ local atomic environments rooted at silicon atoms. The \ce{Si-O} bonds are defined as pairs of \ce{Si} and \ce{O} atoms closer than $2.2$ \AA, the first minimum of the radial distribution function of the silica glasses. This produces environments that are relatively stable to perturbations of the cutoff; if the cutoff is changed by $\pm 0.2$ \AA, $98.3\%$ of the radius $6$ environments in glass quenched at a rate of  $5 \times 10^{11} \, \mathrm{K / s}$ are unchanged. The corresponding percentages for glasses quenched at rates of $5 \times 10^{12} \, \mathrm{K / s}$ and $5 \times 10^{13} \, \mathrm{K / s}$ are $97.0\%$ and $95.0\%,$ respectively. As such, a different choice of cutoff within that range would not significantly affect the analysis below.

Coordination defects are present at low rates in all the quenched silica glasses. As expected~\cite{2015koziatek}, they occur most often at the highest cooling rate for which $1.47\% $ of silicons are $5$-valent, $0.96\%$ of oxygens are $3$-valent, and $0.22\%$ of oxygens are $1$-valent. Although rare from the standpoint of individual atoms, the percentage of radius $6$ environments that contain at least one coordination defect is $24.2\%$, $41.5\%$, and $64.1\%$ for glasses cooled at $5 \times 10^{11} \, \mathrm{K / s}$, $5 \times 10^{12} \, \mathrm{K / s}$ and $5 \times 10^{13} \, \mathrm{K / s}$, respectively. Some of the descriptors considered here are more sensitive to these coordination defects than others. To illustrate this and to give a different perspective on the relative merits of the descriptors, our methodology is applied both to the full samples of $10^5$ atomic environments at each quench rate and to the sub-samples of perfectly coordinated environments. 

Crystalline forms of silica are considered in Sec.~\ref{sec:crystal} before proceeding to the glassy structures. The Shannon entropies of the empirical probability distributions are computed in Sec.~\ref{sec:numClasses}, indicating the extent to which the various descriptors provide informative descriptions of the local structure as a function of radius. Based on this preliminary analysis and on that of the crystalline structures, further analysis is restricted to the $H_1$ barcode, primitive ring profile, and coordination profile at radius $6$, and graph isomorphism at radius $5$. Section~\ref{sec:mutualInformation} uses the mutual information to compare the information provided by the different descriptors. Finally, we compare the local structure of silica glasses produced by different cooling rates in Sec.~\ref{sec:differentCooling}.

\subsection{Crystal Structures}
\label{sec:crystal}

\begin{table}
\centering
\small
\begin{tabular}{l|c|c|c}
 & $H_1$ Barcode & P. Rings & Shell Count \\
\hline
Q & $3\times \paren{0,6},3\times\paren{2,6}$ & $6 \;12\text{-rings}$ & $\paren{1,4,4,12,12,36,30}$\\
\hline
C &  $4\times \paren{0,6}, 5\times\paren{2,6},$ & $12 \;12\text{-rings}$ & $\paren{1,4,4,12,12,36,24}$\\
       &              $4\times\paren{4,6}$                            &                         & \\
\hline
T &  $3\times \paren{0,6}, 7\times \paren{2,6},$ & $12 \;12\text{-rings}$ & $\paren{1,4,4,12,12,36,25}$\\
       &              $1\times\paren{4,6}$                            &                         & 

\end{tabular}
\caption{\label{table:crystal} The $H_1$ barcode, primitive ring profile, and shell count for three crystalline forms of silica: $\alpha$-quartz (Q), cristobalite (C), and tridymite (T).}
\end{table}

Table~\ref{table:crystal} shows the $H_1$ barcode, primitive ring profile, and coordination profiles of local atomic environments rooted at the silicon atoms of three different crystalline forms of silica, namely, $\alpha$-quartz, cristobalite, and tridymite (the absence of coordination defects makes the shell count equivalent to the coordination profile here). None of the descriptors is able to differentiate between the crystal structures for radii $r \leq 5$. The $H_1$ barcodes and coordination profiles of cristobalite and tridymite begin to differ at radius $r = 6$, but the primitive ring profile is unable to distinguish these structures at any radius. This indicates that any descriptor of local atomic environments in silica should be computed at radius $6$ or above to be substantially informative.

\subsection{Shannon Entropy}
\label{sec:numClasses}

An ideally informative descriptor of local atomic environments would retain enough information to differentiate environments, but not so much as to regard each one as unique. Whether a descriptor is too informative can be evaluated by computing the Shannon entropy~\cite{1948shannon} of the corresponding empirical probability distribution. Recall from Section~\ref{sec:class} that, given a bond network with atoms $V$, a descriptor $X$ allows the definition of an empirical probability distribution $P_X$ on equivalence classes $x$. The Shannon entropy associated with the descriptor $X$ is defined as
\[H\paren{X} = -\sum_{x} P_{X}\paren{x} \log[P_{X}\paren{x}].\]
$H(X)$ is minimized by a descriptor that places all atoms in a single equivalence class, resulting in an entropy of $0$. On the other hand, $H(X)$ is maximized if each atom is in an equivalence class unto itself, resulting in an entropy of
\[H\paren{X}=-\sum_{v\in V}\frac{1}{\abs{V}}\log\paren{\abs{V}^{-1}} = \log\paren{\abs{V}}.\]
Normalizing $H\paren{X}$ by $1 / \log\paren{\abs{V}}$ gives scaled entropies between $0$ and $1$, with the property that $X$ is an uninformative descriptor of local structure if the scaled entropy is close to either bound  (though for different reasons).

Table~\ref{table:shannonEntropy} gives the scaled Shannon entropies for the sample of $10^5$ local atomic environments in silica glasses cooled at a rate of $5 \times 10^{11} \, \mathrm{K / s}$. In addition to graph isomorphism, the $H_1$ barcode, the primitive ring profile, and the coordination profile, the primitive cluster of Hobbs et al.~\cite{1997hobbs} is also calculated (truncated by the radius). Recall from the introduction that the primitive cluster is the union of the primitive rings containing the root vertex, and that equivalence classes of primitive clusters are defined using graph isomorphism.

The scaled entropies for graph isomorphism and primitive clusters are already close to $1.0$ for $r = 6$, indicating that they do not provide an informative classification at this radius because the empirical probability distribution cannot be constructed. This is a consequence of the fact that the number of distinct equivalence classes observed approaches the total number of environments in the sample ($9.8 \times 10^4$ different graph isomorphism classes and $8.6 \times 10^4$ different primitive cluster classes). On the other hand, the primitive ring profile, $H_1$ barcode, and coordination profile have scaled entropies ranging from $0.381$ to $0.473$ at $r = 6$, and provide informative classifications at this radius. The primitive ring profile can even be used for larger radii since the number of possible equivalence classes grows more slowly as a function of radius for this descriptor. Since the Shannon entropy is relatively insensitive to the tail of the probability distribution (by design), analogous data for the sub-sample of perfectly coordinated environments is nearly identical.

\begin{table}
\centering
\begin{tabular}{l|c|c|c|c|c}
r & Graph Iso.  &  P. Cluster &  $H_1$ &    P. Rings &  Coordination  \\
\hline 
$4$  &  $0.136$ & $0.101$ &  $0.099$ &  $0.094$ & $0.142$ \\
\hline         
$5$  &  $0.445$ & $0.358$ &  $0.270$ &  $0.226$ & $0.299$ \\
\hline
$6$ &  $0.983$ & $ 0.861 $ & $0.471$ &  $0.381$ & $0.473$ \\
\hline
$7$  &  $1.000$ & $0.999$ &  $0.704$ &  $0.550$ &  $0.678$ \\
\hline
$8$  &  $1.000$ & $-$ &  $0.902$ &  $0.706$ &  $0.846$ \\
\hline
$9$  &  $1.000$ & $-$ &  $0.986$ &  $0.821$ &  $0.965$ \\
\end{tabular}
\caption{\label{table:shannonEntropy} Scaled Shannon entropies of the empirical probability distributions for various descriptors for $4 \leq r \leq 9$. Data is computed for a sample of $10^5$ local atomic environments in silica glasses cooled at $5 \times 10^{11} \, \mathrm{K / s}$.}
\end{table}


\subsection{Mutual Information}
\label{sec:mutualInformation}

A natural question to ask is whether different descriptors encode similar information about local atomic environments. This is measured by means of the uncertainty coefficient $U(X|Y)$ between the empirical probability distributions induced by descriptors $X$ and $Y$; the uncertainty coefficient effectively indicates the amount of information about $P_{X}$ that can be deduced given prior knowledge of $P_{Y}$ \cite{2007press}. The precise definition involves the empirical joint distribution
\[P_{X, Y}\paren{x, y} = (\# v\in V: X(v) = x, Y(v) = y) / \abs{V}.\]
The mutual information~\cite{1948shannon} of $P_X$ and $P_Y$ can then be defined as
\[I\paren{X; Y} = \sum_{x, y} P_{X, Y}\paren{x, y} \log \bigg[ \frac{P_{X, Y}\paren{x, y}}{P_{X}\paren{x} P_Y\paren{y}} \bigg] .\]
Finally, the uncertainty coefficient between $P_X$ and $P_Y$ is the ratio of the mutual information to the Shannon entropy of $P_X$, or
\[U\paren{X|Y} = I\paren{X; Y} / H\paren{X}.\]
$U\paren{X|Y} = 0$ if and only if $P_X$ and $P_Y$ are independent (i.e., none of the information encoded by $X$ is given by $Y$). Conversely, $U\paren{X|Y} = 1$ if and only if $Y\paren{v_1} = Y\paren{v_2}$ implies that $X\paren{v_1} = X\paren{v_2}$ for all $v_1, v_2 \in V$ (i.e., all of the information encoded by $X$ is given by $Y$). Observe that this definition is inherently asymmetric. \begin{table}
\centering
\begin{tabular}{l|c|c|c|c}
 & $H_1,$ 6 & P. Rings, 6 & Coord., 6 & Graph Iso., 5\\
\hline
$H_1,$ 6 & $1.00$ & $0.52$ &  $0.83 $ & $0.62$ \\
\hline
P. Rings, 6 &  $0.65$ & $1.00$ & 0.62 & $0.63$ \\
\hline
Coordination, 6&  $0.83$ & $0.50$  & $1.00$ & $0.61$\\
\hline
Graph Iso., 5&  $0.66$ & $0.54$ & $0.65$ & $1.00$  
\end{tabular}
\caption{\label{table:uncertainty} The uncertainty coefficients of three classifiers at $r = 6$ and graph isomorphism at $r = 5$ applied to a data set of $10^5$ local atomic environments in silica glasses cooled at $5 \times 10^{11} \, \mathrm{K / s}$. }
\end{table} \begin{table}
\centering
\begin{tabular}{l|c|c|c|c}
 & $H_1,$ 6 & P. Rings, 6 & Coord., 6 & Graph Iso., 5\\
\hline
$H_1,$ 6 & $1.00$ & $0.53$ &  $0.82 $ & $0.58$ \\
\hline
P. Rings, 6 &  $0.63$ & $1.00$ & 0.59 & $0.60$ \\
\hline
Coordination, 6&  $1.00$ & $0.60$  & $1.00$ & $0.60$\\
\hline
Graph Iso., 5&  $0.68$ & $0.59$ & $0.57$ & $1.00$  
\end{tabular}
\caption{\label{table:uncertainty_perfect} The same data as in Table~\ref{table:uncertainty} but for the sub-sample of perfectly coordinated environments.}
\end{table} Table~\ref{table:uncertainty} gives the uncertainty coefficients between the $H_1$ barcode ($B_6$), primitive ring profile ($P_6$), and coordination profile ($V_6$) at radius $r = 6,$ and graph isomorphism at radius $r = 5,$ for the data set of $10^5$ local atomic environments in silica glasses cooled at $5 \times 10^{11} \, \mathrm{K / s}$. The uncertainty coefficients $U\paren{B_6|V_6} = 0.83$ and $U\paren{V_6|B_6} = 0.83$ are quite high, and substantially higher than $U\paren{B_6|P_6} = 0.52$ and $U\paren{V_6|P_6} = 0.50$. This can be explained by observing that the $H_1$ barcode and coordination profile consider all atoms in the local atomic environment of radius $6$ rather than just those in the local cluster. What is perhaps surprising is that $U\paren{P_6|B_6} = 0.65$ is not higher. This indicates that the number and length of primitive rings at the root encodes different information than the number and length of algebraically independent rings in the local atomic environment. That the values in the fourth column are all around $0.6$ indicates that a large amount of information is gained by considering atoms in the $6^\text{th}$ neighbor shell. If the $H_1$ barcode and primitive ring profile were instead constructed for $r = 5$ they would be fully explained by graph isomorphism at that radius, resulting in uncertainty coefficients of $1.00$. 

Table~\ref{table:uncertainty_perfect} shows the uncertainty coefficient computations for the sub-sample of $75,849$ perfectly coordinated local atomic environments in glasses cooled at $5 \times 10^{11} \, \mathrm{K / s}$. The most dramatic change is that $U\paren{V_6|B_6} = 1.00$, indicating that the shell count provides a strictly coarser classification (or strictly less information) than the $H_1$ barcode for perfectly coordinated environments. In fact, the shell count provides the same information as the endpoints of the $H_1$ barcode intervals in this case, as shown in Appendix~\ref{sec:perfectH1}.

\subsection{Different Cooling Rates}
\label{sec:differentCooling}
\begin{figure*}
\subfigure[]{\label{fig:H1_freqs}
\includegraphics[width=0.32\textwidth]{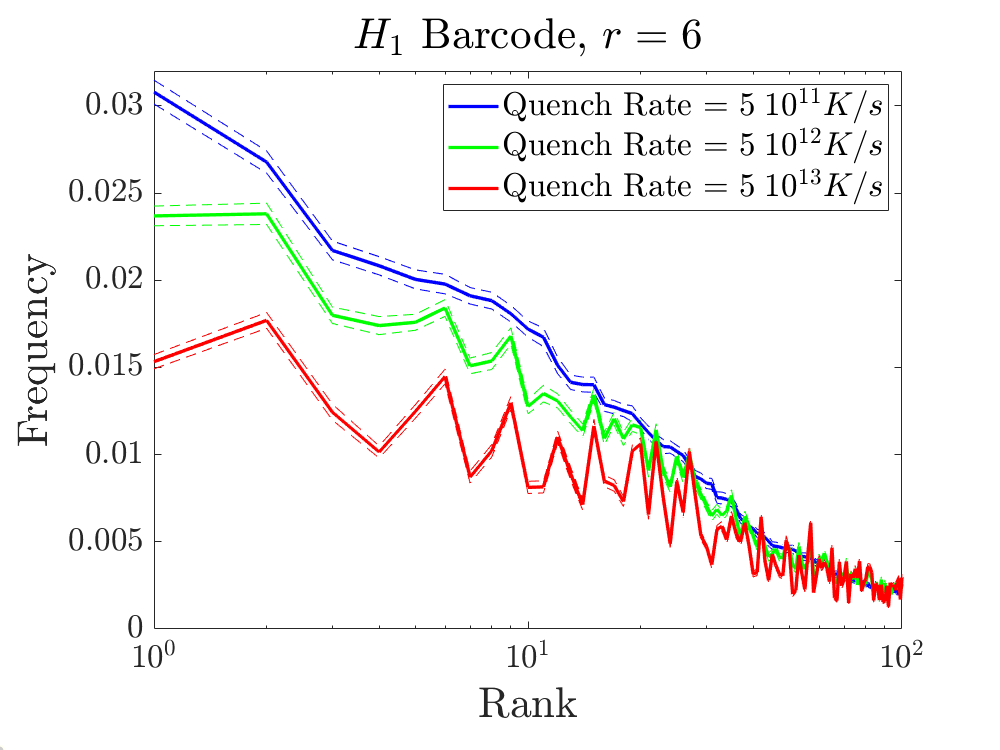}}
\subfigure[]{\label{fig:ring_freqs}
\includegraphics[width=0.32\textwidth]{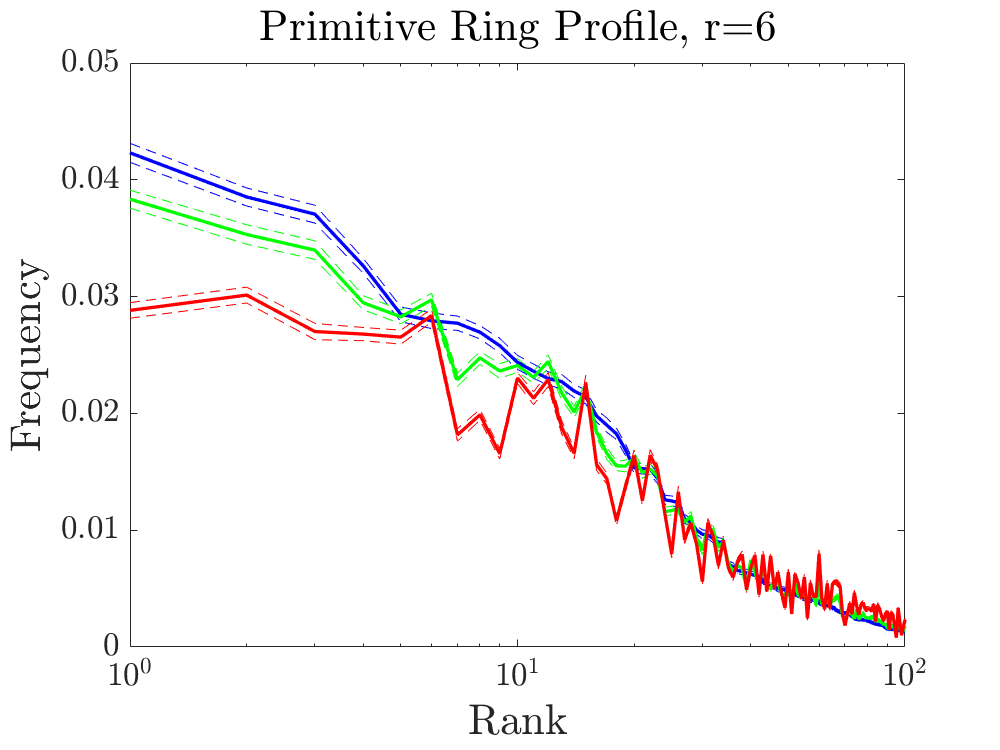}}
\subfigure[]{\label{fig:valence_freqs}
\includegraphics[width=0.32\textwidth]{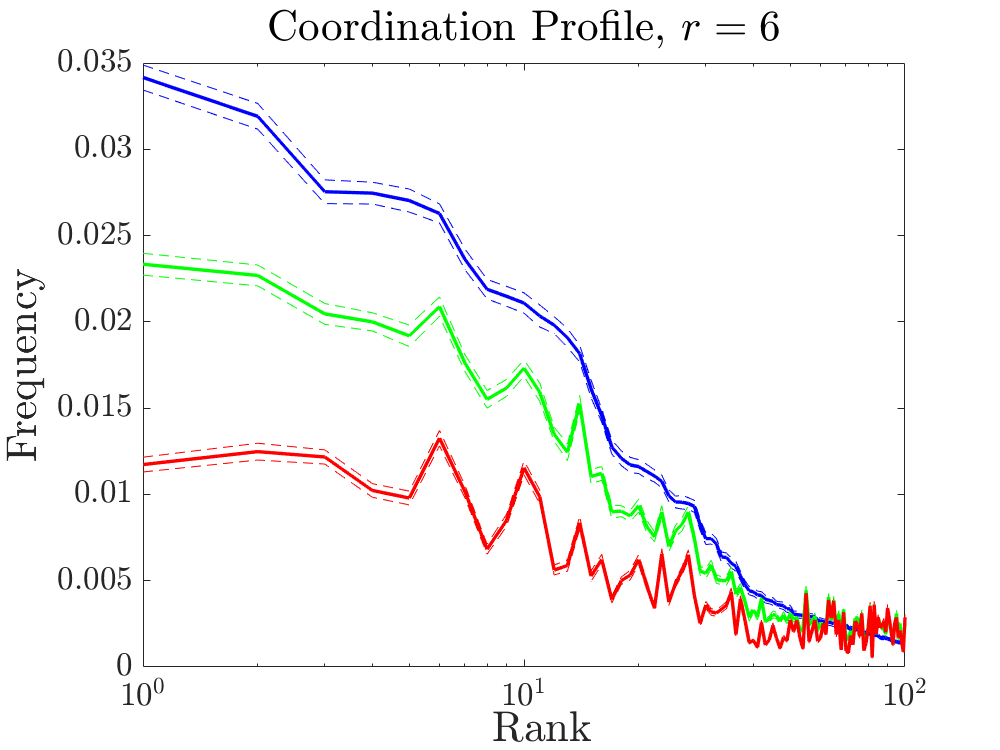}}
\caption{\label{fig:freqs}Empirical probability distributions of equivalence classes in silica glasses produced at different cooling rates. The equivalence classes are ranked by frequency for the glass with the slowest cooling rate, with the frequency of the class on the $y$-axis. The descriptors corresponding to the first ten equivalence classes in each figure are given in Appendix~\ref{sec:examples} (the first column of each of the three tables). The dashed lines indicate the standard error of the frequency estimates.}
\end{figure*}
The initial analysis of silica glasses prepared at different cooling rates involves comparing the shapes of the empirical probability distributions for three different descriptors in Fig.~\ref{fig:freqs}. The rank of an equivalence class in the glass produced with the slowest quench rate is given on the $x$-axis, and the frequency for each of the three cooling rates on the $y$-axis. The blue curve corresponding to the slowest quench rate $5 \times 10^{11} \, \mathrm{K / s}$ is monotonically decreasing, but the green and red curves corresponding to quench rates of $5 \times 10^{12} \, \mathrm{K / s}$ and $5 \times 10^{13} \, \mathrm{K / s}$ are jagged; this is because the ranking of the most common equivalence classes is different in the three cases. While the rank may change, common equivalence classes for one cooling rate tend to be common in the others as well.

In all three figures, the blue curve is highest followed by the green and then the red. That is, the empirical probability distributions for glasses cooled at faster rates are broader, the most common classes are relatively less common, and more probability mass is in the tail of the distribution. In other words, glasses cooled at faster rates exhibit more disorder in their local structure. The corresponding plots for the perfectly coordinated sub-samples (not shown) are similar for the $H_1$ barcode and primitive ring profile, with slightly higher frequencies and slightly less separation between the cooling rates. The coordination profile provides substantially less separation for the sub-sample; this change is quantified below.

\subsubsection{KL Divergence}

The ability of the descriptors to differentiate glasses produced at different quench rates is quantified by the symmetrized Kullback--Leibler (KL) divergence~\cite{1951kullback} between the corresponding empirical probability distributions. The KL divergence $D_{KL}(P_X \| Q_X)$ effectively measures the relative entropy of the probability distribution $P_X$ with respect to $Q_X$, and is perhaps the most natural extension of the Shannon entropy to this context. While the KL divergence is asymmetric, the symmetrized version conforms more to our intuitive notions of a distance (it is a semimetric on probability densities, but a pseudosemimetric on bond networks).

Given a descriptor $X$, two bond networks $V_1$ and $V_2$ give the two empirical probability distributions $P_X$ and $Q_X$ over a single set of equivalence classes $x$. The KL divergence between $P_X$ and $Q_X$ is defined as
\[D_{KL}\paren{P_X \| Q_X} = \sum_{x} P_X\paren{x} \log \bigg[ \frac{P_X\paren{x}}{Q_X\paren{x}} \bigg].\]
This is symmetrized by adding the relative entropy of $P_X$ with respect to $Q_X$ to that of $Q_X$ with respect to $P_X$, or
\[D_{KL}\paren{P_X, Q_X} = D_{KL}\paren{P_X \| Q_X} + D_{KL}\paren{Q_X \| P_X}.\]
Practical calculation of $D_{KL}\paren{P_X, Q_X}$ requires the substitution $q \log\paren{q / p} = 0$ if either $q$ or $p$ is zero, and is justified under two reasonable assumptions: (1) the frequency of an equivalence class in the underlying probability distributions is non-zero in one preparation if and only if it is non-zero in the other, and (2) if an equivalence class is so rare in one preparation so as to not appear in the sample, it is also rare in the other.

Table~\ref{table:KL} shows the symmetrized KL divergences between the empirical probability densities, effectively measuring the separation between the curves in Fig.~\ref{fig:freqs}. Lower values indicate that larger sample sizes are required to reliably differentiate glasses produced at different cooling rates. The coordination profile provides the best differentiation between different cooling rates, followed by the $H_1$ barcode and finally the primitive ring profile. The coordination profile's advantage is, however, attributed to its sensitivity to coordination defects; consider Table~\ref{table:KL_defectFree}, which shows the corresponding data for the sub-samples of perfectly coordinated environments. In this case, the $H_1$ barcode provides the best discrimination between quench rates, followed by the coordination profile and finally the primitive ring profile.

Symmetrized KL divergences were also computed between the empirical probability densities for the full sample of $10^5$ local atomic environments in glasses cooled at $5 \times 10^{13} \, \mathrm{K / s}$ and the sub-sample of $3.6 \times 10^4$ perfectly coordinated environments. The divergence for the coordination profile is very large ($0.657$), and much larger than those given by the $H_1$ barcode ($0.086$) or the primitive ring profile ($0.050$). That is, the coordination profile is quite sensitive to coordination defects, but the other two descriptors are not. Divergences for glasses cooled at slower rates were smaller since perfectly coordinated environments made up a larger proportion of the total sample.

\begin{table}
\centering
\begin{tabular}{l|c|c|c}
Comparison & $H_1$ barcode & P. Rings & Coord. \\
\hline
 $5\times 10^{11},5\times 10^{12}$ & $0.065$ & $0.026$ & $0.162$ \\
\hline
 $5\times 10^{11},5\times 10^{13}$ & $0.288 $ & $0.146$ & $0.655$ \\ 
\hline
 $5\times 10^{12},5\times 10^{13}$ & $0.149$ & $0.066$ & $0.279$ 
\end{tabular}
\caption{\label{table:KL} The symmetrized Kullback--Leibler divergence for three different descriptors with $r = 6$ between empirical probability distributions at different cooling rates.}
\end{table}
\begin{table}
\centering
\begin{tabular}{l|c|c|c}
Comparison & $H_1$ barcode & P. Rings & Coord. \\
\hline
 $5\times 10^{11},5\times 10^{12}$ & $0.044$ & $0.019$ & $0.021$ \\
\hline
 $5\times 10^{11},5\times 10^{13}$ & $0.198 $ & $0.121$ & $0.140$ \\ 
\hline
 $5\times 10^{12},5\times 10^{13}$ & $0.112$ & $0.062$ & $0.069$ 
\end{tabular}
\caption{\label{table:KL_defectFree} The same data as in Table~\ref{table:KL_defectFree}, but for the sub-samples of perfectly coordinated environments (for which the coordination profile gives equivalent information to the shell count).}
\end{table}

\subsubsection{Ranked Examples}
\label{sec:ranked}
\begin{figure*}
\centering  
\subfigure[]{\label{fig:r5_1_a}
\includegraphics[width=0.3\textwidth]{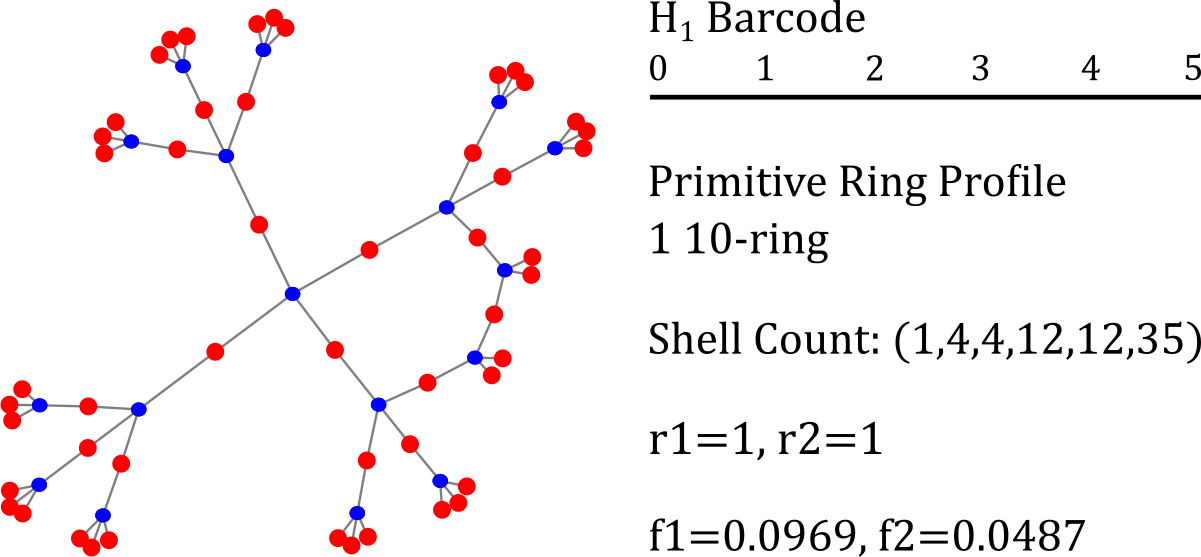}}
\subfigure[]{\label{fig:r5_1_b}
\includegraphics[width=0.3\textwidth]{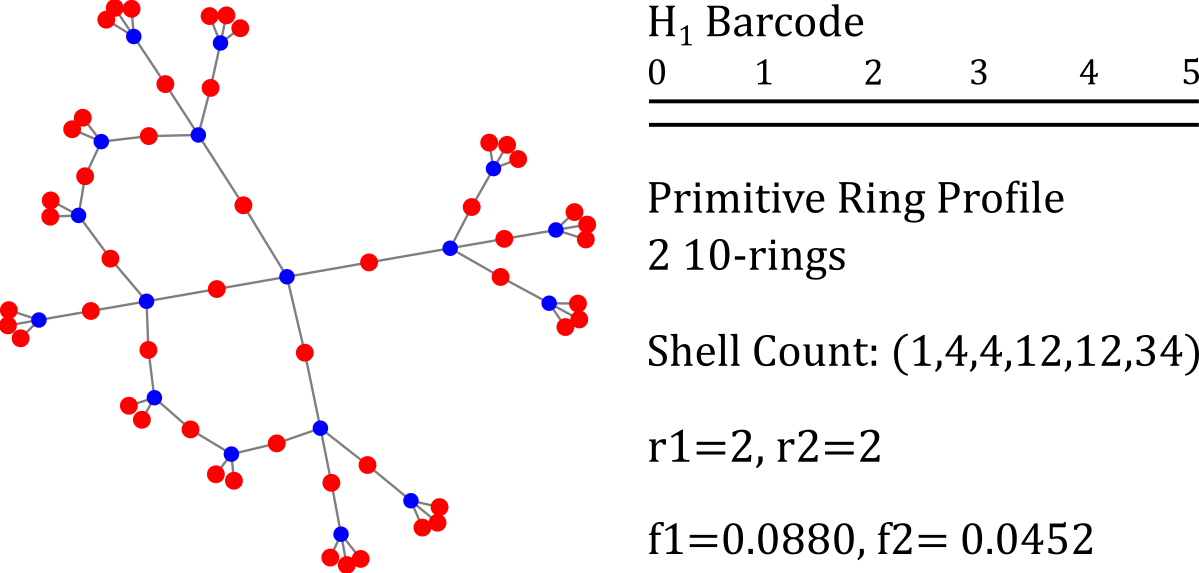}}
\subfigure[]{\label{fig:r5_1_c}
\includegraphics[width=0.3\textwidth]{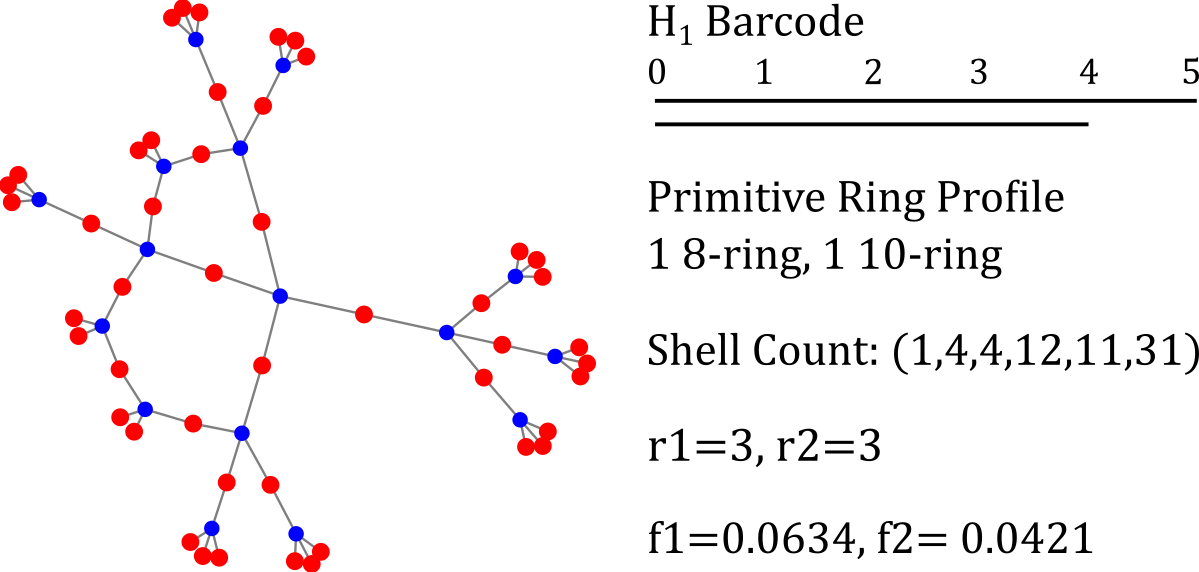}}\\
\subfigure[]{\label{fig:r5_1_d}
\includegraphics[width=0.3\textwidth]{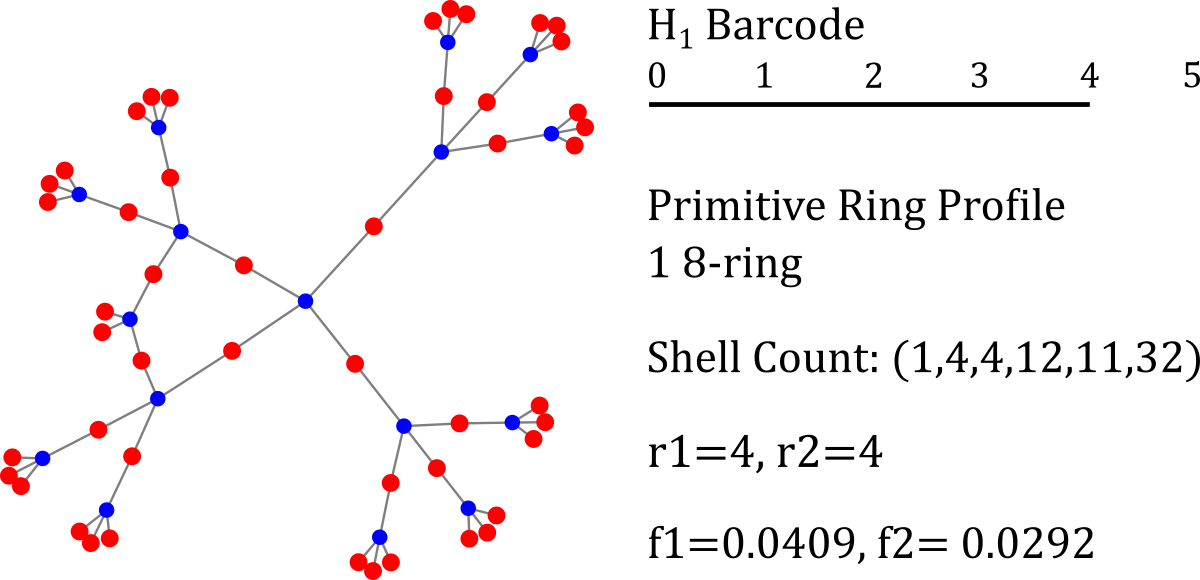}}
\subfigure[]{\label{fig:r5_1_e}
\includegraphics[width=0.3\textwidth]{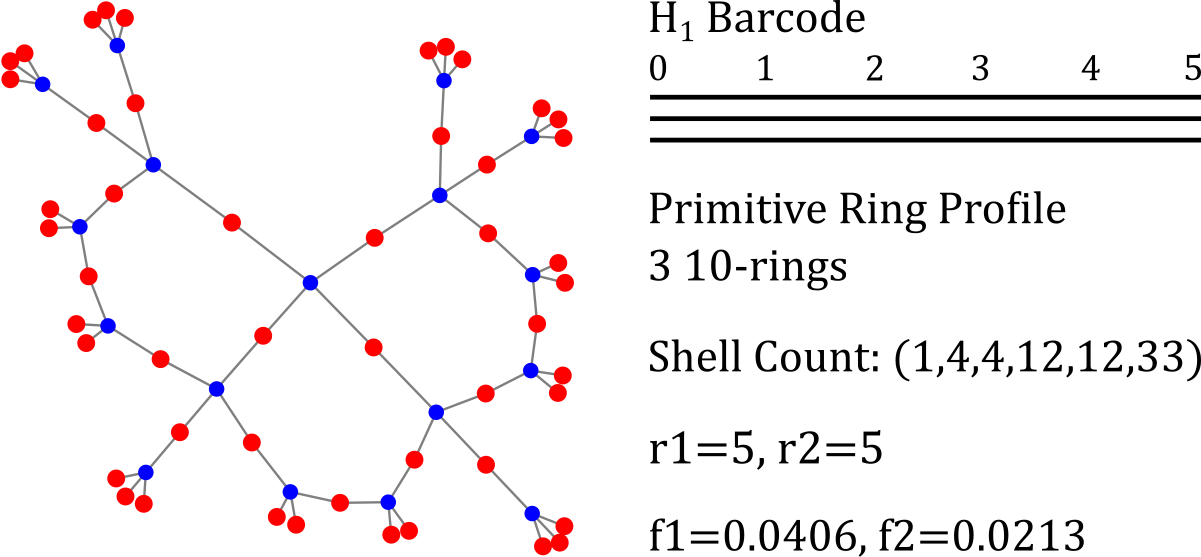}}
\subfigure[]{\label{fig:r5_1_f}
\includegraphics[width=0.3\textwidth]{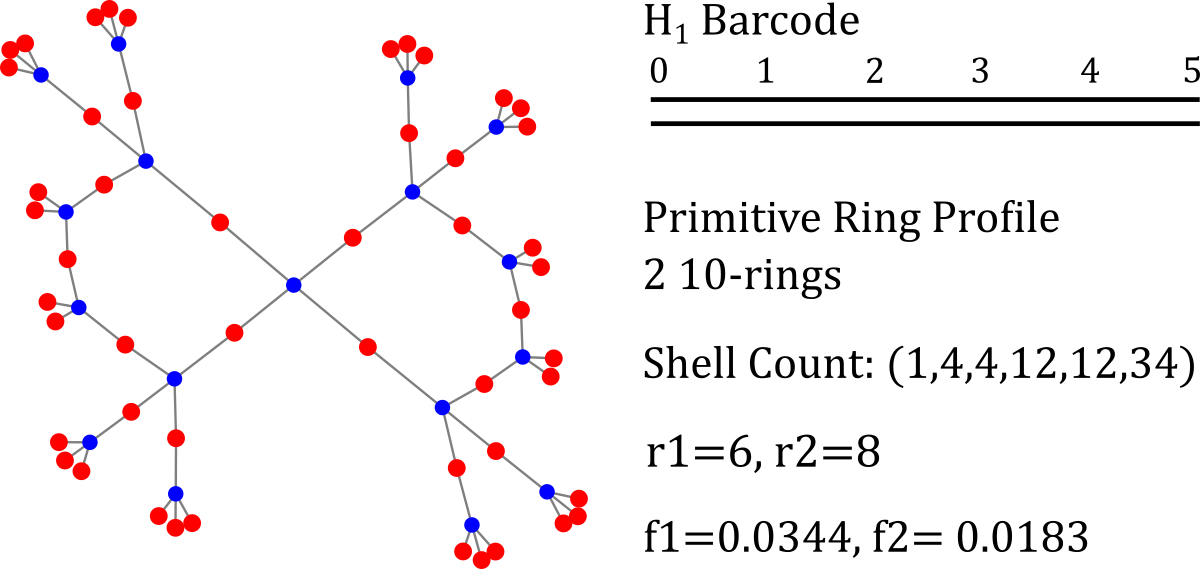}}
\caption{\label{fig:r5_1} The six most frequent graph isomorphism classes at radius $5$ in glasses cooled at a rate of $5 \times 10^{11} \, \mathrm{K / s}$. Graph embeddings where generated using a force-directed layout~\cite{1991fruchterman}.}
\end{figure*}

\begin{figure*}
\centering  
\subfigure[]{\label{fig:r5_2_a}
\includegraphics[width=0.3\textwidth]{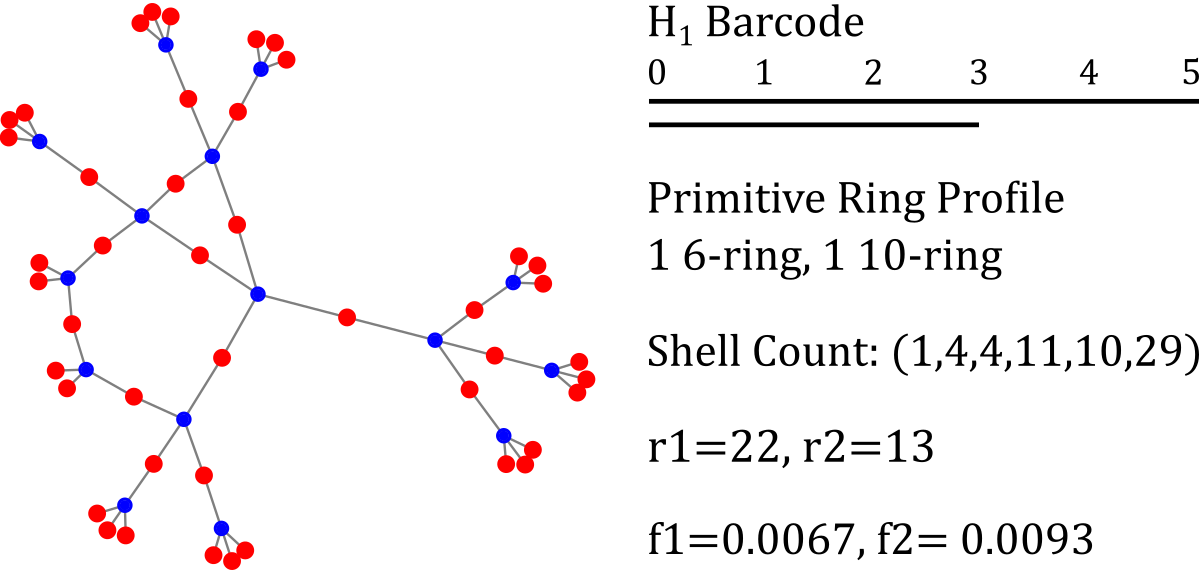}}
\subfigure[]{\label{fig:r5_2_b}
\includegraphics[width=0.3\textwidth]{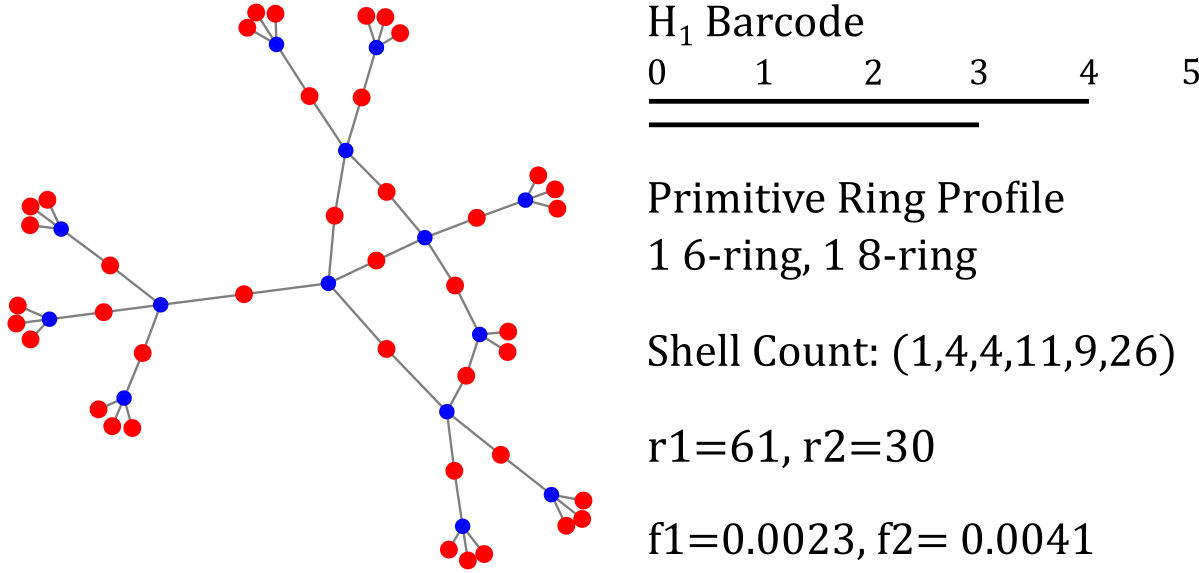}}
\subfigure[]{\label{fig:r5_2_c}
\includegraphics[width=0.3\textwidth]{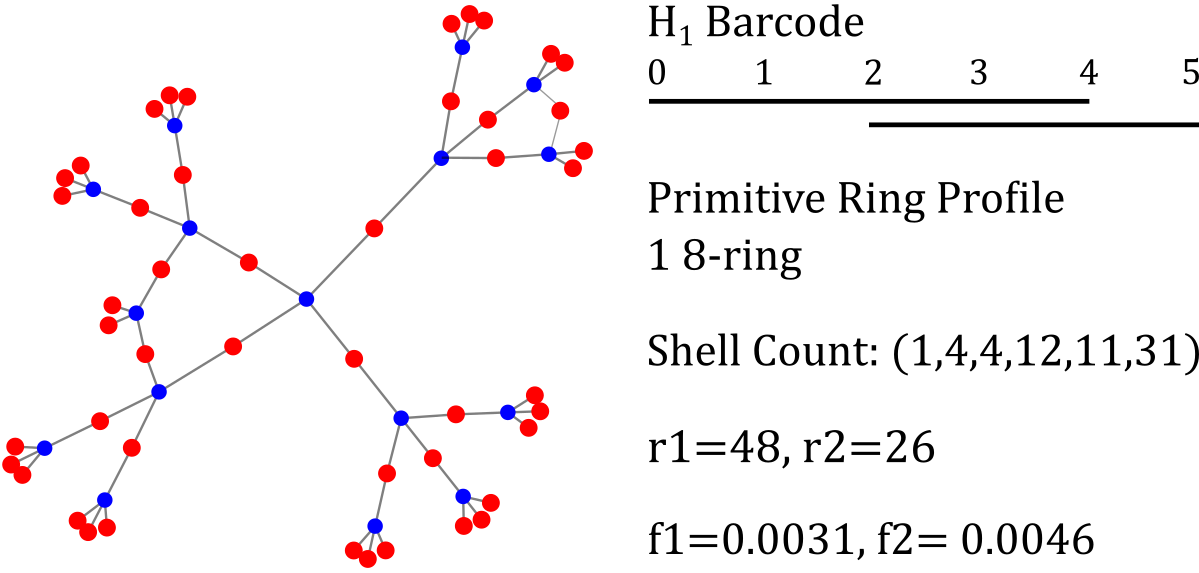}}\\
\subfigure[]{\label{fig:r5_2_d}
\includegraphics[width=0.3\textwidth]{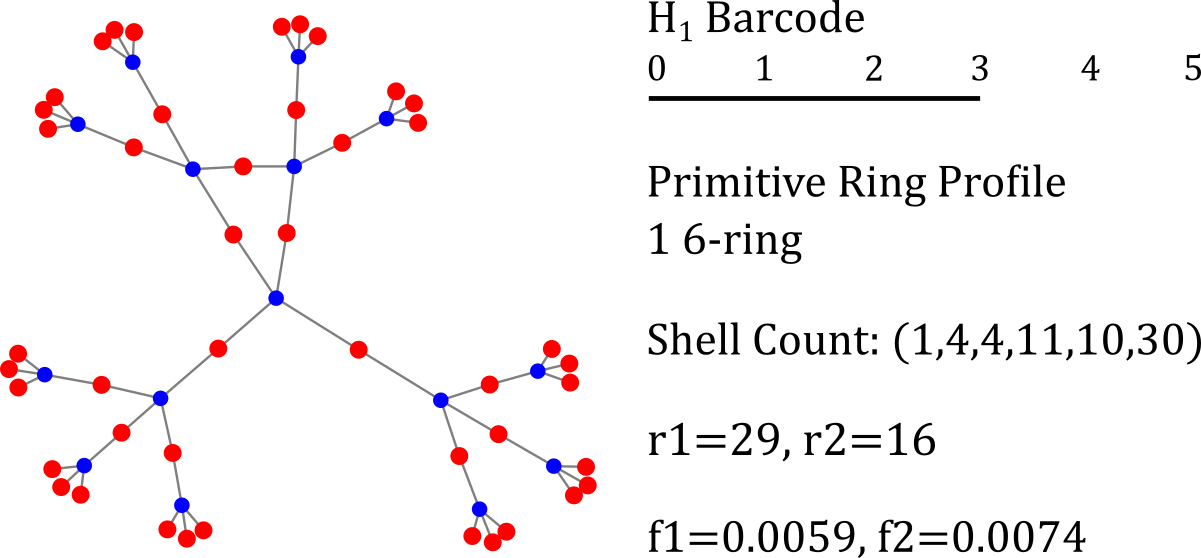}}
\subfigure[]{\label{fig:r5_2_e}
\includegraphics[width=0.3\textwidth]{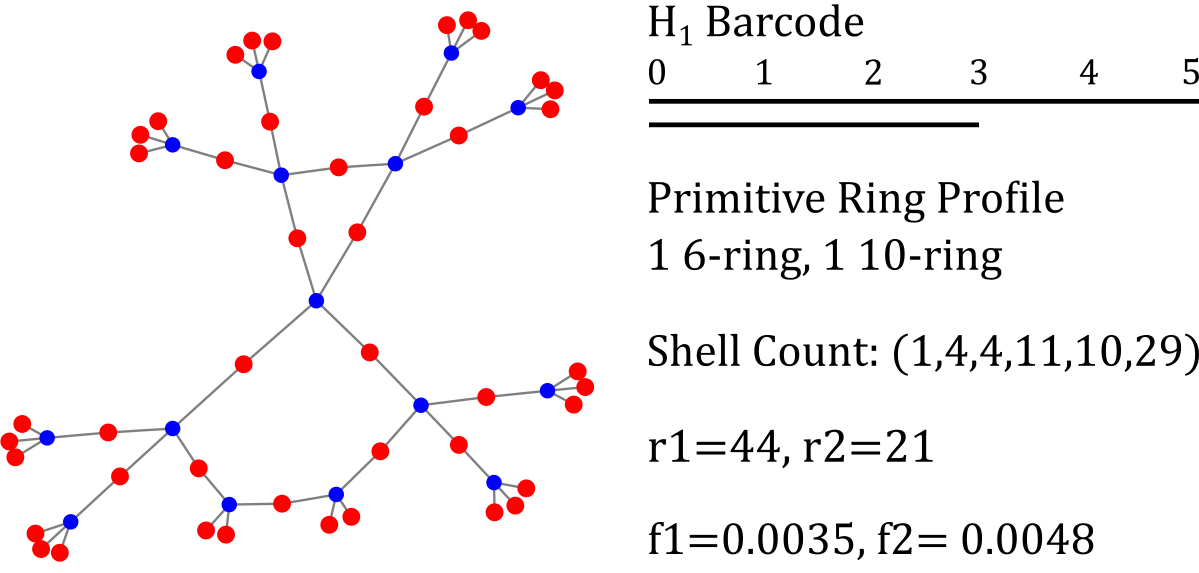}}
\subfigure[]{\label{fig:r5_2_f}
\includegraphics[width=0.3\textwidth]{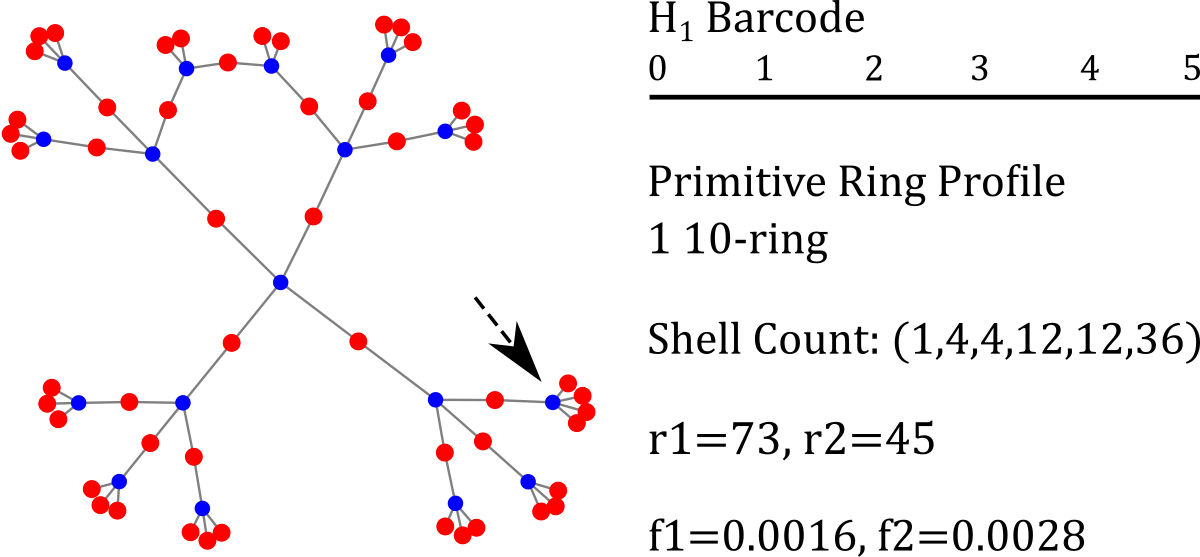}}
\caption{\label{fig:r5_2} The six graph isomorphism classes at radius $5$ that maximize $f_2 - f_1$. }
\end{figure*}
We consider examples of the most common equivalence classes, and the ones that are relatively over-represented in glasses produced with a faster cooling rate. At radius $5,$ Figure~\ref{fig:r5_1} shows the six most common graph isomorphism classes in glasses produced at a cooling rate of $5 \times 10^{11} \, \mathrm{K / s}$, with the corresponding barcode, primitive ring profile, and shell count (there were no coordination errors in these examples). $f_1$ and $r_1$ are the frequency and rank of the isomorphism class at a cooling rate of $5 \times 10^{11} \, \mathrm{K / s}$, and $f_2$ and $r_2$ are the corresponding quantities for a cooling rate of $5 \times 10^{13} \, \mathrm{K / s}$. The ranking is very similar for both conditions, reinforcing that the set of most frequent classes is relatively insensitive to cooling rate. As expected, environments with $10$-rings are well-represented, but (c) and (d) have $8$-rings. More interestingly, several of the most common environments include adjacent rings, i.e., an \ce{Si-O-Si} bond which is shared by $2$ or $3$ $10$-rings as in (b) and (d), or by a $10$- and an $8$-ring as in (c). The adjacency of these rings cannot be detected by ring statistics alone. 

Fig.~\ref{fig:r5_2} shows the six graph isomorphism types that maximize $f_2 - f_1$, i.e., the types most over-represented in glasses produced with the fastest cooling rate. The data reveals that glasses produced with a faster cooling rate have relatively more short rings with $6$ or $8$ atoms, and configurations with adjacent short rings are especially favored. This is consistent with the fact that, at lower quench rates, the number of $12$-rings (with $6$ \ce{Si} atoms) increases while that of other rings decreases~\cite{2015koziatek}. 

Tables~\ref{table:H1_r6}-\ref{table:V_r6} in Appendix~\ref{sec:examples} show similar data for the $H_1$ barcode, primitive ring profile, and coordination profile at radius $r = 6$. For each descriptor, the top ten classes ranked by $f_1$, $f_1 - f_2$, and $f_2 - f_1$ are listed, using the same notation as in the previous paragraph. The descriptors that maximize $f_2 - f_1$ correspond to the peaks in Fig.~\ref{fig:freqs} where the red curve is above the blue one. Examining Table~\ref{table:H1_r6} or~\ref{table:PR_r6} reveals that local environments in glasses produced at $5 \times 10^{11} \, \mathrm{K / s}$ have more rings with $10$ or $12$ atoms ($5$ or $6$ \ce{Si} atoms), while a cooling rate of $5 \times 10^{13} \, \mathrm{K / s}$ yields more $6$- and $8$-rings ($3$ or $4$ \ce{Si} atoms). For example, the early peak in Fig.~\ref{fig:ring_freqs} corresponds to a primitive ring profile with $1$ primitive $8$-ring, $2$ primitive $10$-rings and $2$ primitive $12$-rings. Moreover, glasses produced by faster cooling rates exhibit more coordination errors. In Table~\ref{table:V_r6}, all of the top ten coordination profiles ranked by $f_2 - f_1$ include a silicon in neighbor shell $6$ that is adjacent to $5$ oxygen atoms, and one contains an oxygen adjacent to $3$ silicon atoms.

\section{Conclusion}

The connectivity of the bonds in a network solid is believed to be involved in a variety of physical quantities, e.g., the viscosity and configurational entropy of covalent glass forming materials. The authors suggest that part of the reason that the dependence of the configurational entropy on composition cannot yet be reliably predicted is the lack of a standard approach to quantify the change to the network connectivity with a change in atomic valence. Defining a probability distribution of local atomic environments that is (by hypothesis) characteristic of identically-prepared bond networks establishes a base that can be used to address this issue, and has the further advantage of unifying the treatment of crystalline and amorphous materials.

A closely-related question, and one that is the main subject of this article, is precisely which type of information should be used to classify local atomic environments. Strictly ascribing to the principle that more information is always preferable leads to the situation that all geometric and topological information about the local atomic environments is retained, rendering the approximation of the underlying probability distribution by any finite sampling untenable. Retaining only a subset of the information is therefore a practical necessity, and depending on precisely what information is retained could actually serve to clarify the relationship of the material's structure and properties. For example, the vibrational entropy of a network solid is certainly more closely related to the number of covalent bonds and the constraints they impose than to the precise geometric arrangement of the atoms.

Given this background, four descriptors of local atomic environments were described and applied to silica networks. An appropriately informative descriptor should not only be able to distinguish the different crystalline forms of silica, but should also be able to differentiate silica glasses produced at different cooling rates on the basis of variations in the bond network connectivity. Of the descriptors considered here, the coordination profile and $H_1$ barcode at radius $6$ performed best at these tasks, and were also  faster to compute than the primitive ring profile at the same radius. The efficacy of the coordination profile could be attributed to its sensitivity to coordination defects though; when the subset of local atomic environments without these defects was considered, the $H_1$ barcode outperformed all the other descriptors. Moreover, the $H_1$ barcode and the primitive ring profile could be more readily interpreted than the coordination profile in terms of ring statistics. Overall, the $H_1$ barcode appears to be an efficient, informative, and interpretable descriptor for local atomic environments in silica, and could be invaluable to further advances in our understanding of the structure of amorphous materials.

\begin{acknowledgments}

We would like to thank Dmitriy Morozov for pointing out the connection between the $H_1$ barcode and extended/zigzag persistent homology. B. Schweinhart was partially supported by  NSF Mathematical Sciences Postdoctoral Research Fellowship under award number DMS-1606259, J. K. Mason was supported by the National Science Foundation under Grant No. DMR 1839370, and D. Rodney acknowledges support from LABEX iMUST (ANR-10-LABX- 0064) of Universit\'{e} de Lyon (programme Investissements d’Avenir, ANR-11-IDEX-0007).
\end{acknowledgments}
\bibliography{refs}

\appendix

\section{Computation of the $H_1$ Barcode}
\label{appendix:H1}

The $H_1$ barcode is computed using M\"{o}bius inversion~\cite{1964rota}, which is a generalized version of the inclusion--exclusion principal for partially ordered sets (Ref.~\cite{1975bender} gives an introduction).

Let $S$ be the set of all sub-intervals $\left(a, b\right)$ of $\paren{0, r}$, and let $\left(a, b\right)\leq \left(c, d\right)$ if $\left(a, b\right)\subseteq \left(c, d\right)$. The  M\"{o}bius function of $S$ is defined recursively for all $\left(a, b\right)\leq \left(c, d\right)$ by
\begin{align*}
\mu[\left(a, b\right), \left(a, b\right)] &= 1\\
\mu[\left(a, b\right), \left(c, d\right)] &= -\sum_{\mathclap{\left(a, b\right) \leq \left(e, f\right) \leq \left(c, d\right)}} \mu[\left(a, b\right), \left(e, f\right)].
\end{align*}
It is a mathematical theorem~\cite{1964rota} that if $F$ is a function of the form
\[F[\left(c, d\right)] = \sum_{\mathclap{\left(a, b\right) \leq \left(c, d\right)}} G[\left(a, b\right)]\]
then $G$ can be recovered by the formula
\[G[\left(c, d\right)] =\sum_{\mathclap{\left(a, b\right) \leq \left(c, d\right)}} F[\left(a, b\right)] \mu[\left(a, b\right), \left(c, d\right)].\]
In particular, if $F[\left(a, b\right)]$ is given as in Eq.~\ref{eq_F}, then $G[\left(a, b\right)]$ is the number of intervals of the form $\left(a, b\right)$ in the $H_1$ barcode. Practically, $F[\left(a, b\right)]$ is computed using the Euler characteristic as in Eq.~\ref{eqn_chi} before applying the previous formula to compute the number of intervals.

\section{The $H_1$ Barcode of Perfectly Coordinated Environments} 
\label{sec:perfectH1}
We show that the information in the shell count is equivalent to that of the endpoints of the $H_1$ barcode intervals for perfectly coordinated environments, where a perfectly coordinated environment is a bipartite rooted graph so vertices in even shells have degree $d_0$ and vertices in odd shells have degree $d_1.$ For the perfectly coordinated silica environments considered in the text, $d_0=4$ and $d_1=2.$ 

\begin{proposition}
Let $G$ be a perfectly coordinated local atomic environment of radius $r.$ The endpoints of the $H_1$ barcode intervals of $G$ can be computed in terms of the shell count, and visa versa.
\end{proposition}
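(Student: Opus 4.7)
My plan is to reduce the proposition to the Euler characteristic formula of Equation~(\ref{eqn_chi}) applied to the shell balls $S(0, j)$, whose trivial topology---they are always connected through the root---will be the lever that lets shell counts see part of the barcode. First I would introduce $b_k$, the number of bonds between shell $k$ and shell $k + 1$. Because $G$ is bipartite with respect to shell parity no bond lies inside a single shell, and because every atom in shell $k$ with $0 < k < r$ is fully coordinated, its $d(k)$ bonds split between shells $k - 1$ and $k + 1$. This yields the recursion
\[
b_0 = d_0, \qquad b_k = d(k)\, s_k - b_{k - 1} \quad \text{for } 0 < k < r,
\]
so every $b_k$ is a function of $(s_0, \ldots, s_{r-1})$. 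Since $S(0, j)$ is connected (each atom lies on a shortest path to the root contained in $S(0, j)$), Equation~(\ref{eqn_chi}) collapses to
\[
F(0, j) = 1 - \sum_{k = 0}^{j} s_k + \sum_{k = 0}^{j - 1} b_k,
\]
which is therefore determined by the shell count.

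The identity $F(0, j) = \#\set{I \in BC : I \subseteq (0, j)}$ together with the vacuity of the constraint $a \geq 0$ on a left endpoint shows that $F(0, j) - F(0, j - 1)$ counts exactly the barcode intervals whose right endpoint equals $j$; letting $j$ range over $0, 1, \ldots, r$ recovers the full multi-set of right endpoints from the shell count. For the converse, the right endpoints give the $F(0, j)$ back as partial sums, and then $s_j = b_{j - 1} - \brac{F(0, j) - F(0, j - 1)}$ combined with the recursion for $b_j$ reconstructs $(s_0, s_1, \ldots, s_r)$ inductively from the base case $s_0 = 1$, $b_0 = d_0$.

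The step I expect to require the most care is pinning down what ``endpoints'' should mean. The argument above identifies the right-endpoint multi-set (the death times) with the shell count, but the full barcode is strictly richer: $F(i, j)$ for $i > 0$ depends on the number of connected components of $S(i, j)$, which can differ between perfectly coordinated environments with identical shell counts, so the left endpoints and the interval pairing are not in general recoverable from the shell count alone. This is consistent with the paper's earlier observation that the shell count is strictly coarser than the full $H_1$ barcode, and indicates that ``endpoints'' here should be read as the right-endpoint multi-set rather than as the full collection of interval pairs.
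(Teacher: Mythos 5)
Your proof is correct and takes essentially the same route as the paper's: the identical bond-count recursion obtained from bipartiteness plus perfect coordination (the paper's $\#\text{bonds}(r,r-1)=d\,\#\text{atoms}(r-1)-\#\text{bonds}(r-1,r-2)$ is your $b_k=d(k)\,s_k-b_{k-1}$), the same application of Eq.~\ref{eqn_chi} to the connected shell balls $S(0,j)$, and the same shell-by-shell induction in both directions. Your closing clarification---that ``endpoints'' must be read as the multiset of right endpoints, since knowledge of $F(0,j)$ for all $j\leq r$ determines exactly those and not the interval pairing---is precisely the reading the paper's own first step implicitly adopts, so it sharpens rather than departs from the published argument.
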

\begin{proof} 
First, note that the number of  $H_1$ barcode intervals whose endpoints are $\leq r_0$ equals (by definition) $F\paren{0,r_0},$ the rank of the first homology group of the local atomic environment of radius $r_0.$ As such, knowledge of the endpoints is equivalent to knowledge of $F\paren{0,r_0}$ for all $r_0\leq r.$ Assume that we know  $F\paren{0,r_0}$ for all $r_0\leq r$ and suppose by induction that we have computed the shell count at radius $r_0$ ($\#\text{atoms}\paren{r_0}$) as well as the number of bonds between the atoms in shells $r_0$ and $r_0-1$ ($\#\text{bonds}\paren{r_0,r_0-1}$) for all $r_0<r.$ The environment is bipartite, so atoms in shell $r-1$ can share bonds with atoms in shells $r$ or $r-2$ but not atoms in shell $r-1.$ It follows that 
\[\#\text{bonds}\paren{r,r-1}=d\,\#\text{atoms}\paren{r-1}-\#\text{bonds}\paren{r-1,r-2}\,\]
where $d=d_0$ if $r-1$ is even and $d=d_1$ otherwise.
\[ \#\text{bonds}\paren{G}=\sum_{r_0\leq r}\#\text{bonds}\paren{r_0,r_0-1}\]
and we can use Eqn.~\ref{eqn_chi} to find
\[\#\text{atoms}\paren{G}=1+ \#\text{bonds}\paren{G}-F\paren{0,r_0}\,.\]
The shell count at radius $r$ is then
\[\#\text{atoms}\paren{r}=\#\text{atoms}\paren{G}-\sum_{r_0<r} \#\text{atoms}\paren{r_0}\,.\]
A similar argument shows that we can compute $F\paren{0,r_0}$ for all $r_0\leq r$ given knowledge of the shell counts.
\end{proof}

\section{Molecular Dynamics Methodology}
\label{sec:MD}

Molecular dynamics simulations of silica glasses were performed using the classical two-body potential developed by van Beest, Kramer and van Santen (BKS)~\cite{1955beest}, modified by Carre\'{e} et al.~\cite{2007carre} to replace the long-range Coulombic interaction by a Wolf truncation. The parameters of the potential are given in Ref.~\cite{2012mantisi}. The glasses considered here contained $3,000$ atoms ($1,000$ \ce{Si} and $2,000$ \ce{O} atoms) in cubic simulation cells with periodic boundary conditions and a density of $2.2 \, \mathrm{g / cm^3}$. They were produced from melts equilibrated at $5200 \, \mathrm{K}$ and quenched at constant volume to $10 \, \mathrm{K}$ at rates between $5 \times 10^{11} \, \mathrm{K / s}$ and $5 \times 10^{13} \, \mathrm{K / s}$. The simulations used a timestep of $1 \, \mathrm{fs}$ and an Andersen thermostat~\cite{1980andersen} which re-assigned atomic velocities with a probability of $0.001$ per timestep. The quenches were followed by energy minimizations to obtain inherent structures before further analyses.

\clearpage
\onecolumngrid

\section{Ranked Examples}
\label{sec:examples}

\begin{table}[H]
\centering
\begin{tabular}{l|c|c|c}
Sorted by & $f_1$ & $f_1 - f_2$ & $f_2 - f_1$\\
\hline
1 &$ 2\times(0,5),(0,6),4\times(2,6)$&$ 2\times(0,5),(0,6),4\times(2,6)$&$ (0,4),(0,5),(0,6),(2,5),4\times(2,6)$\\
& $f_1=0.033, f_2=0.017$& $f_1=0.033, f_2=0.017$& $f_1=0.004, f_2=0.006$\\
& $r_1=1, r_2=2$& $r_1=1, r_2=2$& $r_1=58, r_2=31$\\
\hline
2 &$ 2\times(0,5),(0,6),3\times(2,6)$&$ 2\times(0,5),(0,6),3\times(2,6)$&$ 2\times(0,4),(0,5),(2,5),2\times(2,6)$\\
& $f_1=0.031, f_2=0.015$& $f_1=0.031, f_2=0.015$& $f_1=0.003, f_2=0.005$\\
& $r_1=2, r_2=3$& $r_1=2, r_2=3$& $r_1=66, r_2=43$\\
\hline
3 &$ (0,4),(0,5),(0,6),3\times(2,6)$&$ (0,5),2\times(0,6),4\times(2,6)$&$ 2\times(0,4),(0,5),(2,5),3\times(2,6)$\\
& $f_1=0.027, f_2=0.018$& $f_1=0.021, f_2=0.010$& $f_1=0.002, f_2=0.003$\\
& $r_1=3, r_2=1$& $r_1=5, r_2=15$& $r_1=103, r_2=61$\\
\hline
4 &$ 2\times(0,5),(0,6),5\times(2,6)$&$ (0,5),2\times(0,6),3\times(2,6)$&$ (0,4),(0,5),(0,6),4\times(2,6),(4,6)$\\
& $f_1=0.022, f_2=0.012$& $f_1=0.019, f_2=0.009$& $f_1=0.003, f_2=0.004$\\
& $r_1=4, r_2=7$& $r_1=8, r_2=17$& $r_1=78, r_2=50$\\
\hline
5 &$ (0,5),2\times(0,6),4\times(2,6)$&$ 2\times(0,5),(0,6),5\times(2,6)$&$ (0,4),2\times(0,5),2\times(2,5),3\times(2,6)$\\
& $f_1=0.021, f_2=0.010$& $f_1=0.022, f_2=0.012$& $f_1=0.002, f_2=0.003$\\
& $r_1=5, r_2=15$& $r_1=4, r_2=7$& $r_1=118, r_2=76$\\
\hline
6 &$ (0,4),(0,5),(0,6),2\times(2,6)$&$ 2\times(0,5),(0,6),2\times(2,6)$&$ (0,4),2\times(0,5),(2,5),3\times(2,6),(4,6)$\\
& $f_1=0.020, f_2=0.012$& $f_1=0.017, f_2=0.008$& $f_1=0.001, f_2=0.002$\\
& $r_1=6, r_2=6$& $r_1=11, r_2=22$& $r_1=151, r_2=93$\\
\hline
7 &$ (0,4),(0,5),(0,6),4\times(2,6)$&$ (0,4),(0,5),(0,6),3\times(2,6)$&$ (0,4),2\times(0,5),(2,5),4\times(2,6),(4,6)$\\
& $f_1=0.020, f_2=0.014$& $f_1=0.027, f_2=0.018$& $f_1=0.001, f_2=0.002$\\
& $r_1=7, r_2=4$& $r_1=3, r_2=1$& $r_1=224, r_2=117$\\
\hline
8 &$ (0,5),2\times(0,6),3\times(2,6)$&$ 3\times(0,5),4\times(2,6)$&$ (0,4),2\times(0,5),(2,5),2\times(2,6),(4,6)$\\
& $f_1=0.019, f_2=0.009$& $f_1=0.019, f_2=0.010$& $f_1=0.001, f_2=0.002$\\
& $r_1=8, r_2=17$& $r_1=9, r_2=12$& $r_1=145, r_2=94$\\
\hline
9 &$ 3\times(0,5),4\times(2,6)$&$ 3\times(0,5),3\times(2,6)$&$ (0,3),(0,5),(0,6),3\times(2,6)$\\
& $f_1=0.019, f_2=0.010$& $f_1=0.017, f_2=0.008$& $f_1=0.002, f_2=0.004$\\
& $r_1=9, r_2=12$& $r_1=12, r_2=21$& $r_1=83, r_2=56$\\
\hline
10 &$ (0,4),2\times(0,5),3\times(2,6)$&$ (0,4),(0,5),(0,6),2\times(2,6)$&$ 2\times(0,4),(0,5),2\times(2,5),2\times(2,6)$\\
& $f_1=0.018, f_2=0.013$& $f_1=0.020, f_2=0.012$& $f_1=0.001, f_2=0.002$\\
& $r_1=10, r_2=5$& $r_1=6, r_2=6$& $r_1=237, r_2=143$\\\end{tabular}
\caption{\label{table:H1_r6} The $H_1$ barcodes at radius $6$ that maximize $f_1$, $f_1 - f_2$, and $f_2 - f_1$. $f_1$ and $r_1$ are the frequency and rank of the isomorphism class at a cooling rate of $5 \times 10^{11} \, \mathrm{K / s}$, and $f_2$ and $r_2$ are the corresponding quantities for a cooling rate of $5 \times 10^{13} \, \mathrm{K / s}$. See Section~\ref{sec:ranked} for details and analysis.}
\end{table}

\begin{table}
\centering
\begin{tabular}{l|c|c|c}
Sorted by & $f_1$ & $f_1-f_2$ & $f_2-f_1$\\
\hline
1 &$ 2\; 10\text{-rings},\,3\; 12\text{-rings}$&$ 2\; 10\text{-rings},\,3\; 12\text{-rings}$&$ 1\; 6\text{-ring},\,1\; 10\text{-ring},\,2\; 12\text{-rings}$\\
& $f_1=0.049, f_2=0.035$& $f_1=0.049, f_2=0.035$& $f_1=0.004, f_2=0.008$\\
& $r_1=1, r_2=1$& $r_1=1, r_2=1$& $r_1=61, r_2=33$\\
\hline
2 &$ 2\; 10\text{-rings},\,4\; 12\text{-rings}$&$ 2\; 10\text{-rings},\,4\; 12\text{-rings}$&$ 2\; 8\text{-rings},\,1\; 10\text{-ring},\,1\; 12\text{-ring}$\\
& $f_1=0.042, f_2=0.029$& $f_1=0.042, f_2=0.029$& $f_1=0.005, f_2=0.008$\\
& $r_1=2, r_2=3$& $r_1=2, r_2=3$& $r_1=46, r_2=38$\\
\hline
3 &$ 3\; 10\text{-rings},\,2\; 12\text{-rings}$&$ 3\; 10\text{-rings},\,3\; 12\text{-rings}$&$ 1\; 6\text{-ring},\,1\; 10\text{-ring},\,1\; 12\text{-ring}$\\
& $f_1=0.039, f_2=0.030$& $f_1=0.037, f_2=0.027$& $f_1=0.003, f_2=0.006$\\
& $r_1=3, r_2=2$& $r_1=4, r_2=5$& $r_1=67, r_2=48$\\
\hline
4 &$ 3\; 10\text{-rings},\,3\; 12\text{-rings}$&$ 2\; 10\text{-rings},\,5\; 12\text{-rings}$&$ 1\; 6\text{-ring},\,1\; 10\text{-ring},\,3\; 12\text{-rings}$\\
& $f_1=0.037, f_2=0.027$& $f_1=0.028, f_2=0.018$& $f_1=0.003, f_2=0.005$\\
& $r_1=4, r_2=5$& $r_1=8, r_2=14$& $r_1=68, r_2=53$\\
\hline
5 &$ 2\; 10\text{-rings},\,2\; 12\text{-rings}$&$ 1\; 10\text{-ring},\,5\; 12\text{-rings}$&$ 2\; 8\text{-rings},\,1\; 10\text{-ring},\,2\; 12\text{-rings}$\\
& $f_1=0.033, f_2=0.027$& $f_1=0.026, f_2=0.017$& $f_1=0.006, f_2=0.008$\\
& $r_1=5, r_2=6$& $r_1=10, r_2=16$& $r_1=44, r_2=35$\\
\hline
6 &$ 1\; 8\text{-ring},\,1\; 10\text{-ring},\,3\; 12\text{-rings}$&$ 3\; 10\text{-rings},\,2\; 12\text{-rings}$&$ 1\; 6\text{-ring},\,2\; 10\text{-rings},\,2\; 12\text{-rings}$\\
& $f_1=0.028, f_2=0.027$& $f_1=0.039, f_2=0.030$& $f_1=0.002, f_2=0.005$\\
& $r_1=6, r_2=7$& $r_1=3, r_2=2$& $r_1=75, r_2=59$\\
\hline
7 &$ 1\; 8\text{-ring},\,2\; 10\text{-rings},\,2\; 12\text{-rings}$&$ 1\; 10\text{-ring},\,6\; 12\text{-rings}$&$ 1\; 6\text{-ring},\,2\; 10\text{-rings},\,1\; 12\text{-ring}$\\
& $f_1=0.028, f_2=0.028$& $f_1=0.018, f_2=0.011$& $f_1=0.003, f_2=0.005$\\
& $r_1=7, r_2=4$& $r_1=19, r_2=26$& $r_1=69, r_2=55$\\
\hline
8 &$ 2\; 10\text{-rings},\,5\; 12\text{-rings}$&$ 1\; 10\text{-ring},\,4\; 12\text{-rings}$&$ 2\; 8\text{-rings},\,2\; 10\text{-rings},\,2\; 12\text{-rings}$\\
& $f_1=0.028, f_2=0.018$& $f_1=0.027, f_2=0.020$& $f_1=0.003, f_2=0.005$\\
& $r_1=8, r_2=14$& $r_1=9, r_2=12$& $r_1=66, r_2=54$\\
\hline
9 &$ 1\; 10\text{-ring},\,4\; 12\text{-rings}$&$ 2\; 10\text{-rings},\,2\; 12\text{-rings}$&$ 1\; 6\text{-ring},\,1\; 8\text{-ring},\,1\; 10\text{-ring},\,1\; 12\text{-ring}$\\
& $f_1=0.027, f_2=0.020$& $f_1=0.033, f_2=0.027$& $f_1=0.001, f_2=0.003$\\
& $r_1=9, r_2=12$& $r_1=5, r_2=6$& $r_1=97, r_2=73$\\
\hline
10 &$ 1\; 10\text{-ring},\,5\; 12\text{-rings}$&$ 3\; 10\text{-rings},\,4\; 12\text{-rings}$&$ 1\; 6\text{-ring},\,1\; 8\text{-ring},\,1\; 10\text{-ring},\,2\; 12\text{-rings}$\\
& $f_1=0.026, f_2=0.017$& $f_1=0.022, f_2=0.017$& $f_1=0.001, f_2=0.003$\\
& $r_1=10, r_2=16$& $r_1=15, r_2=15$& $r_1=117, r_2=93$\\
\end{tabular}
\caption{\label{table:PR_r6} The primitive ring profiles at radius $6$ that maximize $f_1$, $f_1 - f_2$, and $f_2 - f_1$.}
\end{table}

\clearpage
\begin{table}
\centering
\begin{tabular}{l|c|c|c}
Sorted by & $f_1$ & $f_1-f_2$ & $f_2-f_1$\\
\hline
1 &$ (1,4,4,12,12,34,27)$&$ (1,4,4,12,12,34,27)$&$ (1,4,4,12,11,30,24^{\ast})$\\
& $f_1=0.038, f_2=0.014$& $f_1=0.038, f_2=0.014$& $f_1=0.002, f_2=0.003$\\
& $r_1=1, r_2=1$& $r_1=1, r_2=1$& $r_1=91, r_2=38$\\
\hline
2 &$ (1,4,4,12,12,34,28)$&$ (1,4,4,12,12,34,28)$&$ (1,4,4,12,11,30,25^{\ast})$\\
& $f_1=0.034, f_2=0.012$& $f_1=0.034, f_2=0.012$& $f_1=0.002, f_2=0.004$\\
& $r_1=2, r_2=5$& $r_1=2, r_2=5$& $r_1=84, r_2=34$\\
\hline
3 &$ (1,4,4,12,12,33,26)$&$ (1,4,4,12,12,33,26)$&$ (1,4,4,12,11,31,26^{\ast})$\\
& $f_1=0.032, f_2=0.012$& $f_1=0.032, f_2=0.012$& $f_1=0.002, f_2=0.004$\\
& $r_1=3, r_2=3$& $r_1=3, r_2=3$& $r_1=82, r_2=37$\\
\hline
4 &$ (1,4,4,12,11,31,26)$&$ (1,4,4,12,12,33,27)$&$ (1,4,4,12,11,30,26^{\ast})$\\
& $f_1=0.028, f_2=0.012$& $f_1=0.027, f_2=0.010$& $f_1=0.001, f_2=0.003$\\
& $r_1=4, r_2=4$& $r_1=6, r_2=10$& $r_1=101, r_2=47$\\
\hline
5 &$ (1,4,4,12,12,34,26)$&$ (1,4,4,12,12,34,26)$&$ (1,4,4,12,12,33,25^{\ast})$\\
& $f_1=0.027, f_2=0.010$& $f_1=0.027, f_2=0.010$& $f_1=0.001, f_2=0.003$\\
& $r_1=5, r_2=7$& $r_1=5, r_2=7$& $r_1=96, r_2=45$\\
\hline
6 &$ (1,4,4,12,12,33,27)$&$ (1,4,4,12,11,31,26)$&$ (1,4,4,12,12,34,27^{\ast})$\\
& $f_1=0.027, f_2=0.010$& $f_1=0.028, f_2=0.012$& $f_1=0.003, f_2=0.004$\\
& $r_1=6, r_2=10$& $r_1=4, r_2=4$& $r_1=56, r_2=27$\\
\hline
7 &$ (1,4,4,12,11,30,25)$&$ (1,4,4,12,12,35,28)$&$ (1,4,4,12,12,33,26^{\ast})$\\
& $f_1=0.026, f_2=0.013$& $f_1=0.022, f_2=0.007$& $f_1=0.003, f_2=0.004$\\
& $r_1=7, r_2=2$& $r_1=9, r_2=13$& $r_1=66, r_2=32$\\
\hline
8 &$ (1,4,4,12,12,33,25)$&$ (1,4,4,12,12,35,29)$&$ (1,4,4,12,11,29,24^{\ast})$\\
& $f_1=0.024, f_2=0.010$& $f_1=0.020, f_2=0.006$& $f_1=0.001, f_2=0.002$\\
& $r_1=8, r_2=8$& $r_1=13, r_2=19$& $r_1=167, r_2=73$\\
\hline
9 &$ (1,4,4,12,12,35,28)$&$ (1,4,4,12,12,33,25)$&$ (1,4,4,12,12,34,28^{\ast})$\\
& $f_1=0.022, f_2=0.007$& $f_1=0.024, f_2=0.010$& $f_1=0.003, f_2=0.004$\\
& $r_1=9, r_2=13$& $r_1=8, r_2=8$& $r_1=64, r_2=31$\\
\hline
10 &$ (1,4,4,12,11,31,27)$&$ (1,4,4,12,12,34,29)$&$ (1,4,4,12,11,31,25^{\ast})$\\
& $f_1=0.021, f_2=0.009$& $f_1=0.019, f_2=0.006$& $f_1=0.002, f_2=0.003$\\
& $r_1=10, r_2=11$& $r_1=14, r_2=18$& $r_1=92, r_2=49$\\\end{tabular}
\caption{\label{table:V_r6} The coordination profiles at radius $6$ that maximize $f_1$, $f_1 - f_2$, and $f_2 - f_1$. The shell count is shown, and $\ast$ indicates the presence of one atom of valence $5$.}
\end{table}

\end{document}